\documentclass[aps,pra,twocolumn,superscriptaddress,nofootinbib]{revtex4-2}

\usepackage[utf8]{inputenc}
\usepackage[T1]{fontenc}
\usepackage{lmodern}

\usepackage{amsmath,amssymb,amsfonts,amsthm}
\usepackage{mathtools}
\usepackage{physics}
\usepackage{bm}
\usepackage{bbm}
\usepackage{graphicx}
\usepackage{xcolor}
\usepackage{subcaption}
\usepackage{hyperref}
\hypersetup{
  colorlinks=true,
  linkcolor=blue,
  citecolor=blue,
  urlcolor=blue
}

\usepackage{microtype}
\usepackage{orcidlink}
\usepackage[colorinlistoftodos]{todonotes}

\newtheorem{theorem}{Theorem}
\newtheorem{proposition}[theorem]{Proposition}

\theoremstyle{definition}
\newtheorem{definition}[theorem]{Definition}
\newtheorem{property}[theorem]{Property}
\theoremstyle{remark}

\newcommand{\cU}{\mathcal{U}}

\newcommand{\Favg}{\overline{F}}             
\newcommand{\Fprod}{\overline{F}^{\otimes}}  

\newcommand{\id}{\mathbbm{1}}
\newcommand{\C}{\mathbb{C}}
\newcommand{\Ad}{\mathrm{Ad}}

\begin{document}

\title{Entangling power and fidelity diagnostics for bipartite quantum channels}

\author{Marcin Rudzi\'nski \orcidlink{0000-0002-6638-3978}}
\affiliation{Faculty of Physics, Astronomy and Applied Computer Science, Jagiellonian University, ul. \L{}ojasiewicza 11, 30-348 Krak{\'o}w, Poland}
\affiliation{Doctoral School of Exact and Natural Sciences, Jagiellonian University, ul. \L{}ojasiewicza 11, 30-348 Krak{\'o}w, Poland}

\author{Gianluigi Tartaglione
\orcidlink{0009-0003-1594-4733}}

\affiliation{Department of Industrial Engineering, University of Salerno, via Giovanni Paolo II 132, 84084 Fisciano (SA), Italy}
\affiliation{Institute of Nanotechnology of the National Research Council of Italy, CNR-NANOTEC, Lecce Central Unit, c/o Campus Ecotekne, Via Monteroni, 73100 Lecce, Italy}

\author{Karol {\.Z}yczkowski \orcidlink{0000-0002-0653-3639}}
\affiliation{Faculty of Physics, Astronomy and Applied Computer Science, Jagiellonian University, ul. \L{}ojasiewicza 11, 30-348 Krak{\'o}w, Poland}
\affiliation{Center for Theoretical Physics, Polish Academy of Sciences,  Al. Lotnik{\'o}w 32/46, 02-668 Warszawa, Poland}

\date{June 1, 2026}

\begin{abstract}
We study two complementary diagnostics of bipartite quantum channels, namely fidelity preservation across different classes of input states and entanglement generation from product inputs, given by properly defined entangling power for bipartite channels. We show that the fidelity averaged over any fixed Schmidt--coefficients local--unitary orbit for equal local dimensions is completely determined by average input--output fidelity and its restriction to product inputs.
We also introduce concurrence- and negativity--based entangling power for 2--qubit channels, prove their convexity and monotonicity under local postprocessing, and show that, unlike the previously proposed linear--entropy quantity, they vanish for all separable channels. Examples of non--separable channels are investigated.
Finally, we generalize our definitions of entangling power to non-product inputs, and provide an analytical lower bound for the concurrence-based entanglement variation, showing its effectiveness with specific examples.
\end{abstract}

\maketitle
\section{Introduction}
\label{sec:intro}

Entanglement is a central nonclassical feature of composite quantum systems and a key resource in quantum information processing \cite{Horodecki2009}. In gate-based architectures, nonlocal processing is built from one and two-qubit gates.  Two-qubit gates are already universal when combined with arbitrary single-qubit operations \cite{DiVincenzo1995,Barenco1995,Bremner2002}. The controlled generation and preservation of bipartite entanglement are therefore essential for scalable quantum computation. However, in realistic devices native operations are noisy and therefore described by quantum channels \cite{NielsenChuang}. This issue is especially relevant in the NISQ era, where gate errors and decoherence strongly constrain circuit depth and performance \cite{Preskill2018}. A basic question is thus how much entanglement a noisy bipartite operation can generate, and how this capability is related to standard diagnostics of gate performance.

Conventional quantifiers show how close a noisy process is to a target operation rather than how strongly it creates entanglement from initially uncorrelated inputs. Average gate fidelity, randomized benchmarking, and unitarity are now standard probes of control quality and noise coherence \cite{Nielsen2002,Bowdrey2002,Magesan2012,Wallman2015}. For bipartite channels, however, these quantities do not answer the equally natural question of how much entanglement can be generated for product--state inputs.

The more general problem of studying the nonlocal properties of bipartite unitary channels dates back to the work of Makhlin \cite{Makhlin2002}, who provided a set of invariants characterizing the nonlocality of unitary operations.
The optimal creation of entanglement by a two-qubit unitary operation, including ancilla-assisted settings, was studied in \cite{Kraus2001}.

Entangling power was introduced by Zanardi, Zalka and Faoro \cite{Zanardi2000} as the average entanglement generated by a bipartite unitary acting on pure product states. This quantity was later connected to operator entanglement and was used to extend the framework to higher-dimensional and ancilla-assisted settings \cite{Zanardi2001, Wang2003,Jonnadula2017}. It is also known that allowing initially entangled inputs can enhance the entanglement-generating capacity of a quantum operation \cite{Leifer2003}. For nonunitary dynamics, a natural extension based on the linear entropy of the reduced output state was also proposed \cite{Kong2015}. Although this definition reduces to the standard pure-state formula in the unitary case, for mixed outputs linear entropy measures local impurity rather than entanglement itself. As a result, it can assign a positive value to channels which do not generate entanglement from product inputs. 
This limits its use as a diagnostic of genuine entanglement generation.

Furthermore, the nonlocal properties of bipartite nonunitary quantum channels have recently become a subject of increasing interest. For instance, the capability of bipartite channels to generate maximally entangled states and secret key has been investigated \cite{Bauml2018}, while resource theoretic approaches to quantify and characterize the nonlocality of channels have been proposed, for example entanglement of channels \cite{Gour2021} and that based on local operations and shared randomness (LOSR), which provides a unifying framework encompassing different types of nonlocal resources \cite{Schmid2020}. 
Certification of nonlocality of bipartite channels from the induced probability distributions
\cite{Rico2025}, as well as the quantification of type-independent nonlocality under the effect of noise \cite{Rico2026} have also been studied.

The main goal of this work is to return to the notion of entangling power for nonunitary operations and to separate genuine mixed-state entanglement generation from local impurity effects. The linear-entropy extension of \cite{Kong2015} is natural by analogy with the unitary pure-output case, but for mixed outputs it can be positive even when the channel does not entangle product inputs.
We therefore replace the linear-entropy output average by quantities built from genuine mixed-state entanglement monotones.
 In the two-qubit case, concurrence and negativity are especially natural choices. Both are standard computable entanglement monotones, and both vanish on separable states \cite{Hill1997,Wootters1998,VidalWerner2002,Horodecki2009}. Concurrence is closely tied to two-qubit entanglement through the convex-roof construction \cite{Wootters1998}, while negativity \cite{Zyczkowski1998,Zyczkowski1999} is directly computable from the partial transpose and is closely related to the Peres--Horodecki criterion \cite{Peres1996,Horodecki1996,VidalWerner2002}. For two qubits, PPT is necessary and sufficient for separability, so negativity detects exactly the entangled outputs. In higher dimensions, negativity remains a useful NPT-entanglement witness and monotone, but it does not detect PPT-bound entangled states.

At the same time, fidelity-based diagnostics remain informative. Besides the usual average input--output fidelity over all pure states, one may consider the corresponding average restricted to product inputs. In this paper we show that these two fidelities determine the full fixed Schmidt--coefficients local--unitary orbit average fidelity for equal local dimensions. 
This makes it possible to distinguish channels that favor lower-purity input orbits from those that favor product inputs, even for channels with zero proper entangling power. For two qubits, reduced purity fixes the Schmidt spectrum up to permutation, so this is equivalently a distinction between more entangled and more product-like pure inputs. Fidelity diagnostics and entanglement generation therefore capture complementary aspects of bipartite noisy dynamics.

The paper is organized as follows.  In the Section~\ref{sec:mean-fidelity} we introduce the average and product-state input--output fidelities, derive closed Kraus-operator formulas for them, and show that, for equal local dimensions, they determine the Schmidt-orbit averaged fidelity profile. In the Section~\ref{sec:entangling-power} we define concurrence- and negativity-based entangling powers, compare them with the linear-entropy-based extension of Ref.~\cite{Kong2015}, and establish their basic structural properties. Next in the Section~\ref{sec:case-studies} we analyze several exactly solvable noise models and controlled-phase benchmarks. In the Section~\ref{sec:bounds} we derive analytic bounds based on the tangle, and finally in the Section~\ref{sec:variation} we extend the framework to average entanglement variation over Schmidt orbits.

\section{Input--output fidelity}\label{sec:mean-fidelity}
A natural first question for a noisy operation is how well it preserves input states on average or in other words, how close/far it is from identity. For a bipartite channel, there are at least two physically distinct notions of such an average. One may average over all pure input states, or restrict the average to product inputs only. The difference between these two quantities is precisely what later allows us to diagnose whether a channel treats different Schmidt classes of input states in different ways.

Let $\Phi:\mathcal B(\C^{d_A}\otimes\C^{d_B})\to \mathcal B(\C^{d_A}\otimes\C^{d_B})$ be a CPTP map with Kraus representation
\begin{equation*}
\Phi(\rho)=\sum_\alpha K_\alpha \rho K_\alpha^\dagger,
\qquad
\sum_\alpha K_\alpha^\dagger K_\alpha=\id.
\end{equation*}
We recall the average input--output fidelity \cite{Nielsen2002,Bowdrey2002}
\begin{equation}
\Favg(\Phi)
:=
\int d\psi\;
\bra{\psi}\Phi(\ketbra{\psi}{\psi})\ket{\psi},
\label{eq:def-Favg}
\end{equation}
where the integral is taken over Haar-random pure states on $\C^{d_A}\otimes\C^{d_B}$. This is the standard average gate fidelity of $\Phi$ with respect to the identity channel \cite{Nielsen2002,Bowdrey2002}. 
For a noisy implementation of a target unitary channel $\cU=\Ad_U$, where
$\Ad_U(\rho):=U\rho U^\dagger$,
one simply applies the same definition to the error channel $\Ad_{U^\dagger}\circ\Phi$,
\begin{equation}
\Favg(\Phi,\cU)
:=
\Favg(\Ad_{U^\dagger}\circ\Phi),
\label{eq:def-Favg-relative}
\end{equation}
and similarly for all other fidelity quantities introduced below.
The average product-state input-output fidelity reads
\begin{align}
\Fprod(\Phi)\hspace{-1mm}
:=\hspace{-1mm}\int \hspace{-1mm} d\psi_A\,d\psi_B
\bra{\psi_A\psi_B}\Phi(\ketbra{\psi_A\psi_B}{\psi_A\psi_B})\ket{\psi_A\psi_B},
\label{eq:def-Fprod}
\end{align}
where the integration is over independent Haar-random pure states on $\C^{d_A}$ and $\C^{d_B}$. The quantity $\Favg$ probes average state preservation over the full Hilbert space, whereas $\Fprod$ describes the sector that is operationally most accessible in many experiments, and provides a direct comparison point for entangled inputs.
Both averages admit simple closed forms in terms of Kraus operators.

\begin{proposition}
\label{prop:Favg-Fprod}
For any bipartite channel $\Phi$,
\begin{equation}
\Favg(\Phi)=\frac{D+T}{D(D+1)}, 
\label{eq:Favg-kraus}
\end{equation}
where \[D=d_A d_B,\quad \text{and}\quad T:=\sum_\alpha |\Tr K_\alpha|^2.\]
Moreover,
\begin{equation}
\Fprod(\Phi)=\frac{D+T+M_A+M_B}{D(d_A+1)(d_B+1)},
\label{eq:Fprod-kraus}
\end{equation}
where reduced Kraus operators read
\begin{equation*}
K_\alpha^A:=\Tr_B K_\alpha,\qquad
K_\alpha^B:=\Tr_A K_\alpha,
\end{equation*}
and the following symbols are used
\begin{equation*}
M_A:=\sum_\alpha \Tr(K_\alpha^A K_\alpha^{A\dagger}),\quad
M_B:=\sum_\alpha \Tr(K_\alpha^B K_\alpha^{B\dagger}).
\end{equation*}
\end{proposition}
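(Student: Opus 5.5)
Both formulas follow from writing the fidelity as a linear functional of a tensor power of the input projector and then using the standard Haar moment formula for symmetric projectors. For a pure input $\ket\psi$,
\[
\bra\psi\Phi(\ketbra{\psi}{\psi})\ket\psi=\sum_\alpha |\bra\psi K_\alpha\ket\psi|^2=\sum_\alpha \Tr\!\big[(K_\alpha\otimes K_\alpha^\dagger)(\ketbra{\psi}{\psi})^{\otimes 2}\big],
\]
so averaging reduces to computing $\int d\psi\,(\ketbra{\psi}{\psi})^{\otimes2}$ on two copies of $\C^{D}$. I will invoke the well-known identity
\[
\int d\psi\,(\ketbra{\psi}{\psi})^{\otimes2}=\frac{\id+F}{D(D+1)},
\]
where $F$ is the swap operator on $\C^{D}\otimes\C^{D}$. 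Then I use $\Tr[(X\otimes Y)\id]=\Tr X\,\Tr Y$ and $\Tr[(X\otimes Y)F]=\Tr(XY)$. With $X=K_\alpha$ and $Y=K_\alpha^\dagger$, summing over $\alpha$ gives $T+\sum_\alpha\Tr(K_\alpha^\dagger K_\alpha)=T+\Tr\id=T+D$. This yields \eqref{eq:Favg-kraus}.

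For the product average I apply the same identity separately on each factor. The second moment of a product input is
\[
\frac{(\id_{AA'}+F_{AA'})\otimes(\id_{BB'}+F_{BB'})}{d_A(d_A+1)\,d_B(d_B+1)},
\]
with the two copies ordered as $AB\,A'B'$. Expanding gives four terms.

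\emph{Term $\id\otimes\id$.} This is the full identity, and it contributes $\sum_\alpha|\Tr K_\alpha|^2=T$.

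\emph{Term $F_{AA'}\otimes F_{BB'}$.} This is the full swap, and it contributes $\sum_\alpha\Tr(K_\alpha K_\alpha^\dagger)=D$.

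\emph{Mixed term $F_{AA'}\otimes\id_{BB'}$.} This evaluates to $\sum_\alpha\Tr_A\!\big[\Tr_B(K_\alpha)\,\Tr_{B}(K_\alpha^\dagger)\big]$, where the partial trace over $B$ of the first copy and over $B'$ of the second copy factor out, and the swap on $A$ then composes the two reduced operators. Since $\Tr_B(K^\dagger)=(\Tr_B K)^\dagger$, this equals $M_A$. By the symmetric argument, the term $\id_{AA'}\otimes F_{BB'}$ gives $M_B$.

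Adding the four terms and using $d_Ad_B=D$ in the denominator gives \eqref{eq:Fprod-kraus}.

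The only step needing care is the mixed-term evaluation: I need to check the index contraction so that a partial swap paired with a partial identity really produces the product of the reduced operators rather than some transposed version. I would verify this by writing $K_\alpha=\sum_{ijkl}k_{ij,kl}\ket{ij}\bra{kl}$ and contracting indices explicitly. As a sanity check, for $\Phi=\id$ we have $T=D^2$ and $M_A=M_B=D\,d_Ad_B\cdot$ — more precisely $K^A=d_B\id_A$, so $M_A=d_B^2d_A$ and $M_B=d_A^2d_B$. The numerator then factorizes as $D(1+d_A)(1+d_B)$, giving $\Fprod=1$ as it must.
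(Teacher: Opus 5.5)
Your proposal is correct and follows essentially the paper's route. The paper evaluates the same second-moment Haar integral through the one-column Weingarten formula in index notation, which is just the matrix-element form of your identity $\int d\psi\,(\ketbra{\psi}{\psi})^{\otimes 2}=(\id+F)/(D(D+1))$, and your mixed-term contraction $\Tr[(K\otimes K^\dagger)F_{AA'}]=\Tr(K^AK^{A\dagger})$ (no transpose arises) is exactly the index sum the paper performs; the garbled fragment in your identity-channel sanity check should be cleaned up, but the corrected values you give there are right.
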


\begin{proof}
Equation~\eqref{eq:Favg-kraus} is the standard average-gate-fidelity formula \cite{Nielsen2002,Bowdrey2002}. For completeness, both Eqs.~\eqref{eq:Favg-kraus} and \eqref{eq:Fprod-kraus} follow from the second-moment Haar identity, that can be also derived by Weingarten calculus \cite{CollinsSniady2006}. 
For further details see Appendix~\ref{app:fidelity-derivations}.
\end{proof}

The structure of Eq. \eqref{eq:Favg-kraus}--\eqref{eq:Fprod-kraus} is already suggestive. The quantity $T$ is the global part that also appears in the ordinary average gate fidelity, while $M_A$ and $M_B$ arise only when the average is restricted to product states. In this sense, $\Fprod$ resolves additional one-body structure that is invisible in $\Favg$ alone.

To probe the dependence on the entanglement of the input state, we use Schmidt orbits.  
Throughout this part of the section we consider equal local dimensions, i.e.
\(d_A=d_B=d\).
Let
\begin{equation}
\ket{\phi_{\vec\lambda}}
:=
\sum_{i=1}^d \sqrt{\lambda_i}\,\ket{ii},
\qquad
\lambda_i\ge 0,
\qquad
\sum_{i=1}^d \lambda_i=1,
\label{eq:schmidt_state_lambda}
\end{equation}
be a pure state in a Schmidt basis with a Schmidt vector
\(\vec\lambda=(\lambda_1,\dots,\lambda_d)\), and define its local-unitary orbit
\begin{equation}
\mathcal S_{\vec\lambda}
:=
\left\{
(U_A\otimes U_B)\ket{\phi_{\vec\lambda}}
:
U_A,U_B\in U(d)
\right\}.
\label{eq:schmidt_orbit_lambda}
\end{equation}
The product orbit corresponds to \(\lambda_1=1\), whereas the maximally entangled orbit corresponds to \(\lambda_i=1/d\) for all \(i\).  
It is convenient to parametrize the orbit by the reduced-state purity
\begin{equation}
\mu(\vec\lambda)
:=
\Tr(\rho_A^2)
=
\sum_{i=1}^d \lambda_i^2,
\qquad
\rho_A=\Tr_B\ketbra{\phi_{\vec\lambda}}{\phi_{\vec\lambda}}.
\label{eq:def-mu-lambda}
\end{equation}
We define the orbit-resolved average fidelity by
\begin{widetext}
\begin{equation}
\Favg_{\vec\lambda}(\Phi)
:=
\int dU_A\,dU_B\;
\bra{\phi_{\vec\lambda}}
(U_A^\dagger\otimes U_B^\dagger)\,
\Phi\!\left(
(U_A\otimes U_B)\ketbra{\phi_{\vec\lambda}}{\phi_{\vec\lambda}}
(U_A^\dagger\otimes U_B^\dagger)
\right)
(U_A\otimes U_B)\ket{\phi_{\vec\lambda}}.
\label{eq:def-Flambda}
\end{equation}
\end{widetext}
The key point is that, although local-unitary orbits in \(d\times d\) are labeled by the full Schmidt vector \(\vec\lambda\), the averaged fidelity depends on \(\vec\lambda\) only through the single scalar \(\mu(\vec\lambda)\).

\begin{proposition}
\label{prop:Flambda-Kraus}
For any bipartite channel, with equal local dimension \(d\) i.e. \(\Phi:\mathcal B(\mathbb C^d\otimes\mathbb C^d)\to
\mathcal B(\mathbb C^d\otimes\mathbb C^d)\),
\begin{align}
\Favg_{\vec\lambda}(\Phi)
=
\frac{1}{(d^2-1)^2}
\Bigg[
(d^2+T)\left(1+\frac{1}{d^2}-\frac{2\mu(\vec\lambda)}{d}\right)
\nonumber
\\+(M_A+M_B)\left(\left(1+\frac{1}{d^2}\right)\mu(\vec\lambda)-\frac{2}{d}\right)
\Bigg],
\label{eq:Flambda-Kraus}
\end{align}
where \(T,M_A,M_B\) are the Kraus invariants from
Proposition~\ref{prop:Favg-Fprod}.
\end{proposition}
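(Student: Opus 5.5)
The plan is to reduce the orbit average to a two-copy local twirl. The first step is to write the integrand of Eq.~\eqref{eq:def-Flambda} as a linear functional of two copies of the input state. With $P_{U}:=(U_A\otimes U_B)\ketbra{\phi_{\vec\lambda}}{\phi_{\vec\lambda}}(U_A^\dagger\otimes U_B^\dagger)$ we have
\begin{equation*}
\bra{\psi}\Phi(\ketbra{\psi}{\psi})\ket{\psi}=\sum_\alpha |\bra{\psi}K_\alpha\ket{\psi}|^2=\Tr\!\Big[\Big(\sum_\alpha K_\alpha\otimes K_\alpha^\dagger\Big)(P_U\otimes P_U)\Big].
\end{equation*}
Here the two copies are ordered as $(A_1B_1)(A_2B_2)$. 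Hence $\Favg_{\vec\lambda}(\Phi)=\Tr[(\sum_\alpha K_\alpha\otimes K_\alpha^\dagger)\,\Omega]$, where $\Omega:=\int dU_A\,dU_B\,P_U\otimes P_U$.

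The second step is to compute $\Omega$. Regroup the four factors as $(A_1A_2)(B_1B_2)$. Then $\Omega$ commutes with $U_A^{\otimes 2}\otimes U_B^{\otimes 2}$. By Schur--Weyl duality for each factor, $\Omega$ lies in the span of $\{X_A\otimes Y_B\}$ with $X,Y\in\{\id,\mathbb F\}$, where $\mathbb F$ is the swap of the two copies of $\C^d$. Write $\Omega=\sum_{X,Y}c_{XY}\,X_A\otimes Y_B$. Since the Haar twirl is a trace-preserving projection onto this commutant, the coefficients are fixed by the four numbers $\Tr[(X_A\otimes Y_B)P\otimes P]$, which are orbit invariants:
\begin{equation*}
\Tr[P\otimes P]=1,\qquad \Tr[(\mathbb F_A\otimes\id_B)P\otimes P]=\Tr[(\id_A\otimes\mathbb F_B)P\otimes P]=\Tr\rho_A^2=\mu(\vec\lambda),
\end{equation*}
together with $\Tr[(\mathbb F_A\otimes\mathbb F_B)P\otimes P]=1$. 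The Gram matrix of the basis is $G\otimes G$ with $G=\begin{pmatrix}d^2&d\\ d&d^2\end{pmatrix}$. Inverting it, $G^{-1}=\frac{1}{d^2(d^2-1)}\begin{pmatrix}d^2&-d\\-d&d^2\end{pmatrix}$, gives the $c_{XY}$ as linear functions of $1$ and $\mu$. This already shows that only $\mu(\vec\lambda)$ enters. The symmetry of the data also forces $c_{\id\id}=c_{\mathbb F\mathbb F}$ and $c_{\id\mathbb F}=c_{\mathbb F\id}$.

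The third step evaluates $\Tr[(\sum_\alpha K_\alpha\otimes K_\alpha^\dagger)(X_A\otimes Y_B)]$ for the four choices:
\begin{itemize}
\item with $\id\otimes\id$ one gets $\sum_\alpha\Tr K_\alpha\Tr K_\alpha^\dagger=T$;
\item with the full swap $\mathbb F_A\otimes\mathbb F_B$ one gets $\sum_\alpha\Tr(K_\alpha K_\alpha^\dagger)=D=d^2$, by trace preservation;
\item with a partial swap on $A$ alone, the trace on the unswapped $B$ factors acts separately on each copy, producing $\Tr_B K_\alpha$ and $\Tr_B K_\alpha^\dagger$; the swap on $A$ then multiplies them, giving $\sum_\alpha\Tr(K^A_\alpha K^{A\dagger}_\alpha)=M_A$;
\item analogously, a partial swap on $B$ alone gives $M_B$.
\end{itemize}
The pairing of which swap yields $M_A$ versus $M_B$ does not matter, since the two coefficients are equal. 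Combining, $\Favg_{\vec\lambda}=c_{\id\id}(d^2+T)+c_{\id\mathbb F}(M_A+M_B)$.

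It remains to expand $c_{\id\id}=\frac{1}{d^4(d^2-1)^2}\big[d^4-2d^3\mu+d^2\big]$ and $c_{\id\mathbb F}=\frac{1}{d^4(d^2-1)^2}\big[-2d^3+(d^4+d^2)\mu\big]$. Dividing through gives exactly the brackets of Eq.~\eqref{eq:Flambda-Kraus}. As checks, $\mu=1$ should reproduce Eq.~\eqref{eq:Fprod-kraus} with $d_A=d_B=d$, and averaging over the full Haar measure should reproduce Eq.~\eqref{eq:Favg-kraus}. The main obstacle is bookkeeping: the tensor ordering between the $(A_1B_1)(A_2B_2)$ and $(A_1A_2)(B_1B_2)$ groupings, and the sign pattern in $G^{-1}\otimes G^{-1}$. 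I would verify these on the product orbit first.
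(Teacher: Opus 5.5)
Your proof is correct. It organizes the calculation differently from the paper, though the underlying ingredient is the same second-moment Haar integral.

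\textbf{The paper's route.} The appendix proof works entirely with matrix elements. It expands $f_{\vec\lambda}(K)$ in the basis $\ket{ab}$ and applies the explicit second-order Weingarten formula separately to the $U$ and $V$ integrals. Contracting the delta functions, one Kraus operator at a time, produces $|\Tr K|^2$, $\Tr(KK^\dagger)$, $\Tr(K^AK^{A\dagger})$ and $\Tr(K^BK^{B\dagger})$ with their $\mu$-dependent weights.

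\textbf{Your route.} You split the problem into a channel-independent part and a state-independent part, via $\Favg_{\vec\lambda}(\Phi)=\Tr[(\sum_\alpha K_\alpha\otimes K_\alpha^\dagger)\,\Omega]$. You compute the orbit two-copy moment $\Omega$ once, using Schur--Weyl duality and the inverse Gram matrix. You then pair it with the four operators $X_A\otimes Y_B$.

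Your coefficients are $c_{\id\id}=(d^2+1-2d\mu)/[d^2(d^2-1)^2]$ and $c_{\id\mathbb F}=((d^2+1)\mu-2d)/[d^2(d^2-1)^2]$. These are exactly the paper's own $\Omega(\mu)$ in Eq.~\eqref{eq:Omega-mu}, which the paper derives only later, for $\overline{E_L}_{\rm out}$ and $\overline{\delta_P}$. So your argument shows that Proposition~\ref{prop:Flambda-Kraus} is the same two-copy contraction, just paired with $\sum_\alpha K_\alpha\otimes K_\alpha^\dagger$ instead of a swap.

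\textbf{What each buys.} In your approach, the fact that only $\mu(\vec\lambda)$ enters is visible before any integral is done, since the only orbit data are $\Tr[(X\otimes Y)P^{\otimes 2}]\in\{1,\mu\}$. The moment is also computed once for all channels. Your checks work out: $\mu=1$ gives $c_{\id\id}=c_{\id\mathbb F}=1/[d^2(d+1)^2]$, which is Eq.~\eqref{eq:Fprod-kraus}. Averaging with the Haar value $\mathbb E\,\mu=2d/(d^2+1)$ kills $c_{\id\mathbb F}$ and gives $1/[d^2(d^2+1)]$, which is Eq.~\eqref{eq:Favg-kraus}. The paper's route is more pedestrian but self-contained, needing only the explicit Weingarten formula and no commutant argument.

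\textbf{One wording point.} What fixes the coefficients is not that the twirl is trace preserving. It is that $\Tr[Z\,\Omega]=\Tr[Z\,P^{\otimes 2}]$ for every $Z$ commuting with $(U_A\otimes U_B)^{\otimes 2}$; equivalently, the twirl is the Hilbert--Schmidt-orthogonal projection onto the commutant. The conclusion you draw from it is right.
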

The proof is presented in Appendix~\ref{app:fidelity-schmidt-qudit}. Eliminating \(T,M_A,M_B\) in favor of \(\Favg\) and \(\Fprod\), by direct calculation gives a simple interpolation formula.
\begin{theorem}
\label{thm:orbit-theorem}
For any channel \(\Phi\) on \(\mathbb C^d\otimes\mathbb C^d\) and any Schmidt vector, \(\vec\lambda\), the average input-output fidelity of an orbit reads
\begin{align}\label{eq:orbit-theorem-qudit}
	\Favg_{\vec\lambda}(\Phi)
	&=
	\Fprod(\Phi)
	+
	\frac{d^2+1}{(d-1)^2}(1-\mu)\,\chi_F(\Phi),
	\\
	\chi_F(\Phi)&:=\Favg(\Phi)-\Fprod(\Phi).\nonumber
	\label{eq:orbit-theorem-qudit-chi}
\end{align}
where the reduced state purity \(\mu=\mu(\vec\lambda)\) depends on Schmidt vector \(\vec{\lambda}\), and  $\chi_F$ is the fidelity-bias parameter.
\end{theorem}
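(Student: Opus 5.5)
The plan is to derive the theorem purely algebraically from Proposition~\ref{prop:Flambda-Kraus} and Proposition~\ref{prop:Favg-Fprod}, with no further Haar integration. The key observation is that the right-hand side of Eq.~\eqref{eq:Flambda-Kraus} is \emph{affine} in $\mu$, and depends on the Kraus invariants only through the two combinations $d^2+T$ and $M:=M_A+M_B$. An affine function of $\mu$ is fixed by its value at one point and its slope. So I will show that the value at $\mu=1$ equals $\Fprod$, and that the slope equals $-\frac{d^2+1}{(d-1)^2}\chi_F$. Rewriting $\Fprod+\text{slope}\cdot(\mu-1)$ then gives Eq.~\eqref{eq:orbit-theorem-qudit}.

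First, I specialize Proposition~\ref{prop:Favg-Fprod} to $d_A=d_B=d$, so $D=d^2$, and invert it:
\begin{equation*}
d^2+T=d^2(d^2+1)\,\Favg,\qquad M=d^2(d+1)^2\,\Fprod-d^2(d^2+1)\,\Favg .
\end{equation*}
These are the only inputs needed.

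Second, I evaluate Eq.~\eqref{eq:Flambda-Kraus} at $\mu=1$, which is the product orbit. Both coefficients become $(1-1/d)^2$, so the bracket equals $\frac{(d-1)^2}{d^2}(d^2+T+M)$. Dividing by $(d^2-1)^2=(d-1)^2(d+1)^2$ gives $(d^2+T+M)/(d^2(d+1)^2)$. This is exactly $\Fprod$ by Eq.~\eqref{eq:Fprod-kraus}, which serves as a consistency check: the $\mu=1$ orbit is the set of product states with the product Haar measure.

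Third, I compute the coefficient of $\mu$ in Eq.~\eqref{eq:Flambda-Kraus}:
\begin{equation*}
\frac{1}{(d^2-1)^2}\left[-\frac{2}{d}(d^2+T)+\frac{d^2+1}{d^2}M\right].
\end{equation*}
Substituting the inverted relations, the $\Favg$ terms combine as $-(d^2+1)(2d+d^2+1)\Favg=-(d^2+1)(d+1)^2\Favg$. The $\Fprod$ term is $(d^2+1)(d+1)^2\Fprod$. Hence the bracket equals $-(d^2+1)(d+1)^2\chi_F$, and after cancelling $(d+1)^2$ the slope is $-\frac{d^2+1}{(d-1)^2}\chi_F$. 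Writing $\Favg_{\vec\lambda}=\Fprod+(\mu-1)\cdot\text{slope}$ yields the theorem.

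The only step I expect to be delicate is this slope computation. The result holds because $2d+d^2+1$ factors as $(d+1)^2$, which lets the $\Favg$ and $\Fprod$ contributions merge into the single combination $\chi_F$. I would verify that factorization first. As a sanity check, I would also confirm that at the maximally entangled orbit, $\mu=1/d$, the formula is consistent with Eq.~\eqref{eq:Flambda-Kraus} for a simple channel such as the identity. For the identity $\chi_F=0$ and every orbit fidelity must equal $1$.
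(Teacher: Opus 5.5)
Your proposal is correct and follows the same route as the paper, which likewise obtains the theorem by eliminating $T$ and $M_A+M_B$ from Proposition~\ref{prop:Flambda-Kraus} using the inverted formulas of Proposition~\ref{prop:Favg-Fprod}. Splitting the calculation into the $\mu=1$ value, which reproduces $\Fprod$, and the slope, where $d^2+2d+1=(d+1)^2$ merges the two terms into $\chi_F$, is a clean way to organize that direct calculation; your algebra checks out.
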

Thus the purity-resolved orbit-averaged fidelity profile is controlled by the two experimentally meaningful averages \(\Favg\) and \(\Fprod\), together with the orbit parameter \(\mu(\vec\lambda)\). Since \(1-\mu(\vec\lambda)\) is the linear entropy of entanglement for pure states, the sign of \(\chi_F\) has the following operational meaning:
\begin{align*}
	\chi_F>0 \Rightarrow &\text{lower-purity input orbits are, on average,} 
		\\&\text{more robust},
		\\\chi_F<0 \Rightarrow & \text{product-like inputs are, on average,} \\&\text{more robust},\\
	\chi_F=0 \Rightarrow &\text{no purity-resolved orbit bias at this level}.
\end{align*}
For two qubits, lower purity is equivalent to larger input entanglement, because
the Schmidt spectrum has only one independent parameter  \(\vec\lambda=(\cos^2\theta,\sin^2\theta)\), so
\begin{equation*}
\mu(\vec\lambda)=\cos^4\theta+\sin^4\theta
=
1-\frac12\sin^2 2\theta.
\end{equation*}
Hence Theorem~\ref{thm:orbit-theorem} reduces to
\begin{equation}
\Favg_{\theta}(\Phi)
=
\Fprod(\Phi)
+\frac52\sin^2(2\theta)\,\Bigl(\Favg(\Phi)-\Fprod(\Phi)\Bigr).
\label{eq:orbit-theorem-qubit-cor}
\end{equation}

For a broad class of product channels one can determine a sign of $\chi_F$ parameter using single--qubit average fidelities. 
\begin{proposition}
	\label{prop:product-channel-chi}
	Let
	\begin{equation}
		\Phi=\Lambda_A\otimes\Lambda_B
		\label{eq:product-channel}
	\end{equation}
	be a two-qubit product channel, and let
	\begin{equation}
		\Lambda_A(\rho)=\sum_i A_i \rho A_i^\dagger,
		\qquad
		\Lambda_B(\rho)=\sum_j B_j \rho B_j^\dagger
	\end{equation}
	be Kraus representations of the local channels. Define
	\begin{equation}
		x_A:=\sum_i |\Tr A_i|^2,
		\qquad
		x_B:=\sum_j |\Tr B_j|^2.
		\label{eq:def-xA-xB}
	\end{equation}
	Equivalently, in terms of the single-qubit average fidelities,
	\begin{equation}
		f_A:=\Favg(\Lambda_A)=\frac{2+x_A}{6},
		\
		f_B:=\Favg(\Lambda_B)=\frac{2+x_B}{6},
		\label{eq:def-fA-fB}
	\end{equation}
	For a two-qubit product channel $\Phi=\Lambda_A\otimes\Lambda_B$,
	\begin{equation}
		\chi_F(\Phi)
		=
		\frac{8+2x_Ax_B-5x_A-5x_B}{90}.
		\label{eq:chi-product-x}
	\end{equation}
	Equivalently,
	\begin{align}
		\chi_F&(\Phi)
		=
		\frac{2+4f_Af_B-3f_A-3f_B}{5}
		\nonumber\\&=
		\frac{4(1-f_A)(1-f_B)-(1-f_A)-(1-f_B)}{5}.
		\label{eq:chi-product-f}
	\end{align}
	In particular, if
	\begin{equation}
		f_A\ge \frac12,
		\qquad
		f_B\ge \frac12,
		\label{eq:condition-fA-fB}
	\end{equation}
	then
	\begin{equation}
		\chi_F(\Lambda_A\otimes\Lambda_B)\le 0.
		\label{eq:chi-product-negative}
	\end{equation}
\end{proposition}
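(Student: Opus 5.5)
The plan is to apply Proposition~\ref{prop:Favg-Fprod} directly with $d_A=d_B=2$, $D=4$, after computing the three Kraus invariants $T$, $M_A$, $M_B$ for the product channel. The formulas then reduce to algebra in $x_A,x_B$; the change of variables to $f_A,f_B$ is a substitution, and the sign claim is a short inequality.

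\emph{Step 1: the invariants.} The channel $\Phi=\Lambda_A\otimes\Lambda_B$ has the product Kraus family $K_{ij}=A_i\otimes B_j$. Using $\Tr(A_i\otimes B_j)=\Tr A_i\,\Tr B_j$ we get
\[
T=\sum_{i,j}|\Tr A_i|^2|\Tr B_j|^2=x_Ax_B .
\]
For the partial traces, $K_{ij}^A=\Tr_B(A_i\otimes B_j)=(\Tr B_j)\,A_i$, so
\[
M_A=\Bigl(\sum_i\Tr A_i^\dagger A_i\Bigr)\,x_B .
\]
Trace preservation of $\Lambda_A$ gives $\sum_i A_i^\dagger A_i=\id_2$, so the bracket equals $2$. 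Hence $M_A=2x_B$, and symmetrically $M_B=2x_A$. This is the only point where a structural property (trace preservation) enters, and it is what makes the result depend only on $x_A,x_B$.

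\emph{Step 2: the two fidelities and their difference.} Substituting into Proposition~\ref{prop:Favg-Fprod} gives
\[
\Favg=\frac{4+x_Ax_B}{20},\qquad \Fprod=\frac{4+x_Ax_B+2x_A+2x_B}{36}.
\]
Over the common denominator $180$, the difference $\Favg-\Fprod$ has numerator
\[
9(4+x_Ax_B)-5(4+x_Ax_B+2x_A+2x_B)=16+4x_Ax_B-10x_A-10x_B .
\]
Dividing by $180$ yields Eq.~\eqref{eq:chi-product-x}.

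The single-qubit relation $f=(2+x)/6$ is Eq.~\eqref{eq:Favg-kraus} with $D=2$. Inverting it, $x=6f-2$, and inserting this into Eq.~\eqref{eq:chi-product-x} gives numerator $36+72f_Af_B-54(f_A+f_B)$ over $90$, i.e.\ the first form in Eq.~\eqref{eq:chi-product-f}. Writing $a=1-f_A$ and $b=1-f_B$, expanding $4ab-a-b$ reproduces $2+4f_Af_B-3f_A-3f_B$, which gives the second form.

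\emph{Step 3: the sign.} Rewrite the numerator as
\[
4ab-a-b=a(2b-1)+b(2a-1).
\]
Since $f_A,f_B\le 1$, we have $a,b\ge0$. The hypothesis \eqref{eq:condition-fA-fB} means $a,b\le\tfrac12$, so $2a-1\le0$ and $2b-1\le0$. Each term is then a nonnegative number times a nonpositive one, and the sum is $\le0$, which is Eq.~\eqref{eq:chi-product-negative}.

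There is no real obstacle. The only step needing care is Step 1: one must correctly identify the partial traces of the product Kraus operators and remember to use trace preservation of $\Lambda_A$ and $\Lambda_B$. The rest is bookkeeping.
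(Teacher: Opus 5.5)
Your proposal is correct and follows the paper's proof essentially step by step. You use the same product Kraus family $K_{ij}=A_i\otimes B_j$, the same invariants $T=x_Ax_B$, $M_A=2x_B$, $M_B=2x_A$ obtained from trace preservation, the substitution into Proposition~\ref{prop:Favg-Fprod}, the change of variables $x=6f-2$, and the sign argument in the variables $1-f_A,1-f_B\in[0,\tfrac12]$; your factorization $4ab-a-b=a(2b-1)+b(2a-1)$ simply spells out the inequality $4uv\le u+v$ that the paper asserts directly.
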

The proof is presented in the Appendix~\ref{app:fidelity-derivations}.
The assumption \eqref{eq:condition-fA-fB} is essential. 
For example, the local unitary error channel $\Ad_{\sigma_x}\otimes\Ad_{\sigma_x}$, where $\sigma_x$ is Pauli X gate,  has
\begin{equation}
	f_A=f_B=\frac13,
	\qquad
	\chi_F(\Ad_X\otimes\Ad_X)=\frac{4}{45}>0.
	\label{eq:local-unitary-counterexample}
\end{equation}
Thus \(\chi_F\) is not a witness of nonlocality. Rather, Proposition~\ref{prop:product-channel-chi} identifies a broad and physically relevant regime, specified by condition~\eqref{eq:condition-fA-fB}, in which local product noise favors product inputs over lower-purity inputs, or at best is purity-blind. For two qubits, the reduced purity fixes the Schmidt spectrum up to permutation, so this is equivalent to a bias with respect to input entanglement. In higher dimensions, however, \(\chi_F\) does not distinguish different Schmidt vectors with the same value of \(\mu\).

\section{Entangling power for 2-qubit channels}\label{sec:entangling-power}

Entangling power for nonunitary evolution was defined by averaging the linear entropy of the reduced output over product pure inputs \cite{Kong2015}:
\begin{align}
e_L(\Phi)&:=\int d\psi_A d\psi_B\ E_L\!\left(\Phi(\ketbra{\psi_A\psi_B}{\psi_A\psi_B})\right),\\
E_L(\rho_{AB})&:=1-\Tr(\rho_A^2),\qquad \rho_A=\Tr_B \rho_{AB}.
\end{align}
For pure state $\rho_{AB}$, $E_L$ is an entanglement monotone, while
for mixed state $\rho_{AB}$, $E_L$ measures local impurity. Therefore, it can be nonzero even when $\rho_{AB}$ is separable, for instance for the maximally mixed state $\rho_{AB}=\mathbbm{1}/d^2$.
Hence \(e_L(\Phi)\) can be nonzero not only for channels that fail to create entanglement on average, but even for separable channels whose outputs are separable for every product input.
To deal with this issue, one can keep the same average over product inputs but replace the reduced-state linear entropy by a genuine mixed-state entanglement monotone.

For two qubits, standard mixed-state entanglement monotones \cite{Plenio2007} include concurrence $C(\rho)$ and negativity $\mathcal{N}(\rho)$.
Therefore, we introduce the following quantities:
\begin{definition}
Concurrence-based entangling power reads
    \begin{equation}
e_C(\Phi):=\int d\psi_A d\psi_B\ C\!\left(\Phi(\ketbra{\psi_A\psi_B}{\psi_A\psi_B})\right).
\label{eq:eC}
\end{equation}
where concurrence is given by
\[C(\rho):=\max\{0,r_1-r_2-r_3-r_4\}.\] 
Here \(r_1\geq r_2\geq r_3\geq r_4\geq0\) are eigenvalues of the matrix \(R=\sqrt{\sqrt{\rho}\tilde{\rho}\sqrt{\rho}}\), where \(\tilde{\rho}=(\sigma_y\otimes\sigma_y)\rho^*(\sigma_y\otimes\sigma_y)\).
\end{definition}

\begin{definition}
    Negativity-based entangling power is
\begin{align}
e_{\mathcal{N}}(\Phi):=\int d\psi_A d\psi_B\ \mathcal{N}\!\left(\Phi(\ketbra{\psi_A\psi_B}{\psi_A\psi_B})\right).
\label{eq:eN}
\end{align}
where negativity is given by
\[\mathcal{N}(\rho):=\frac{\|\rho^{T_B}\|_1-1}{2}.\]
Here $\rho^{T_B}$ denotes partial transpose on $B$ and $\|\cdot\|_1$ is the trace norm. 
\end{definition}

Note basic properties of introduced quantities. 
\begin{property}
    For introduced entangling power measures the following inequality holds
\begin{equation}
e_{\mathcal N}(\Phi)\le \frac12\,e_C(\Phi).
\label{eq:eN-le-half-eC}
\end{equation}
It follows from inequality for concurrence and negativity $2\mathcal{N}(\rho)\leq C(\rho)$  \cite{Verstraete2001},  with equality for pure states. 
\end{property}

\begin{property}
    Assuming $\Phi$ is separable, i.e. it admits a Kraus form with product Kraus operators
\begin{equation}
\Phi(\rho)=\sum_k (A_k\otimes B_k)\rho(A_k\otimes B_k)^\dagger,
\label{eq:sepchannel}
\end{equation}
then $\Phi$ maps every product pure input to a separable output, so
\begin{equation}
e_C(\Phi)=0,\qquad e_{\mathcal{N}}(\Phi)=0.
\label{eq:sep_zero}
\end{equation}
Note that the converse statement is not true in general. 
\end{property}

\begin{property}
    The introduced measures are invariant under local unitary transformations i.e. for $\Phi'=\Ad_{U_A\otimes U_B}\circ \Phi\circ \Ad_{V_A\otimes V_B}$,
\begin{equation}
e_C(\Phi')=e_C(\Phi),\qquad e_{\mathcal{N}}(\Phi')=e_{\mathcal{N}}(\Phi),
\end{equation}
since the product Haar measure and $(C,\mathcal{N})$ are invariant under local unitaries.
\end{property}

\begin{property}\label{prop:convexity-monotonicity}
    $e_{C}$ and $e_{\mathcal N}$ are convex, while $e_{L}$ is concave.
\begin{align}
e_C(p\Phi+(1-p)\Psi)
&\le p\,e_C(\Phi)+(1-p)\,e_C(\Psi),
\label{eq:eC-convex}
\\
e_{\mathcal N}(p\Phi+(1-p)\Psi)
&\le p\,e_{\mathcal N}(\Phi)+(1-p)\,e_{\mathcal N}(\Psi),
\label{eq:eN-convex}
\\
e_L(p\Phi+(1-p)\Psi)
&\ge p\,e_L(\Phi)+(1-p)\,e_L(\Psi).
\label{eq:eL-concave}
\end{align}
\end{property}

\begin{proof}
    Concurrence and negativity are convex state entanglement monotones \cite{Wootters1998,VidalWerner2002}, hence
\begin{align*}
C\!\left(p\rho+(1-p)\sigma\right)
&\le p\,C(\rho)+(1-p)\,C(\sigma),
\\
\mathcal N\!\left(p\rho+(1-p)\sigma\right)
&\le p\,\mathcal N(\rho)+(1-p)\,\mathcal N(\sigma).
\end{align*}
Applying this pointwise to
\begin{align*}
&(p\Phi+(1-p)\Psi)(\ketbra{\psi_A\psi_B}{\psi_A\psi_B})
\\&=
p\,\Phi(\ketbra{\psi_A\psi_B}{\psi_A\psi_B})
+(1-p)\,\Psi(\ketbra{\psi_A\psi_B}{\psi_A\psi_B}),
\end{align*}
and averaging proves \eqref{eq:eC-convex} and \eqref{eq:eN-convex}. On the other hand, $E_L(\rho)=1-\Tr(\rho_A^2)$ is concave because the purity $\Tr(\rho_A^2)$ is convex. Therefore
\begin{equation*}
E_L(p\rho+(1-p)\sigma)\ge p\,E_L(\rho)+(1-p)\,E_L(\sigma),
\end{equation*}
and averaging gives \eqref{eq:eL-concave}.
\end{proof}

\begin{property}
    $e_{C}$ and $e_{\mathcal N}$ are monotone under local postprocessing. For any local channel
\begin{equation}
\Lambda_{\rm loc}=\Lambda_A\otimes \Lambda_B,
\end{equation}
one has
\begin{equation}
e_C(\Lambda_{\rm loc}\circ \Phi)\le e_C(\Phi),
\qquad
e_{\mathcal N}(\Lambda_{\rm loc}\circ \Phi)\le e_{\mathcal N}(\Phi).
\label{eq:local-post-monotone}
\end{equation}
No analogous monotonicity holds for $e_L$ in general.
\end{property}
\begin{proof}
	Let us denote
    \begin{equation*}
\rho_{\psi_A,\psi_B}:=\Phi(\ketbra{\psi_A\psi_B}{\psi_A\psi_B}).
\end{equation*}
Since concurrence and negativity do not increase under local CPTP maps \cite{Wootters1998,VidalWerner2002},
\begin{equation*}
C\!\left((\Lambda_A\otimes\Lambda_B)(\rho_{\psi_A,\psi_B})\right)\le C(\rho_{\psi_A,\psi_B}),
\end{equation*}
and similarly for $\mathcal N$. Taking an average proves \eqref{eq:local-post-monotone}. For $e_L$, such monotonicity fails: indeed $e_L(\mathcal I)=0$, whereas local dephasing $(\Lambda_\gamma\otimes\Lambda_\gamma)\circ \mathcal I$ has $e_L>0$ for $0<\gamma<1$, as shown in Sec.~\ref{sec:case-studies}.
\end{proof}

Property~\ref{prop:convexity-monotonicity} shows the basic conceptual difference between linear entropy based definition of entangling power and the proposed quantifiers of entangling power. Mixing channels can only decrease the average entanglement generated from product inputs when the underlying state measure is convex, which is exactly what one expects physically. The linear-entropy quantity behaves oppositely because it rewards loss of local purity. In this sense, $e_L$ is not merely quantitatively different from $e_C$ and $e_{\mathcal N}$, but it is qualitatively sensitive to a different physical resource.

Finally, it is worth emphasizing the role of the two choices $C$ and $\mathcal N$. Concurrence is especially convenient for two qubits because it is both computable and tightly connected to the pure-state linear entropy. Negativity is often more conservative numerically, but has the advantage of being directly defined from the partial transpose and therefore conceptually close to the PPT criterion \cite{Peres1996,Horodecki1996,VidalWerner2002}.

\section{Diagnostic regimes in representative two-qubit channels}
\label{sec:case-studies}
This section illustrates how the preceding diagnostics capture different
features of bipartite channels.
Throughout this section we restrict to two-qubit channels. 

\subsection{Non-entangling channels: false positives of \(e_L\)}
\label{subsec:false-positive-regime}

We consider channels that do not generate entanglement from product
inputs,
\begin{equation}
	e_C(\Phi)=e_{\mathcal N}(\Phi)=0,
\end{equation}
although the linear-entropy based entangling power \(e_L\) defined in \cite{Kong2015} remains positive. 
Thus, when \(e_L\) is interpreted as a diagnostic of genuine
mixed-state entanglement generation, it gives a false positive. The reason is
that for mixed outputs \(E_L\) measures local impurity rather than entanglement.

The correlated dephasing channel can be written as a mixture of
product unitaries,
\begin{equation}
	\Phi_u(\rho)
	=
	(1-\gamma)\rho+\gamma (Z\otimes Z)\rho(Z\otimes Z),
	\quad
	u=1-2\gamma .
	\label{eq:globalPD_randomU}
\end{equation}
\begin{figure}[h]
	\centering
	\includegraphics[width=\columnwidth]{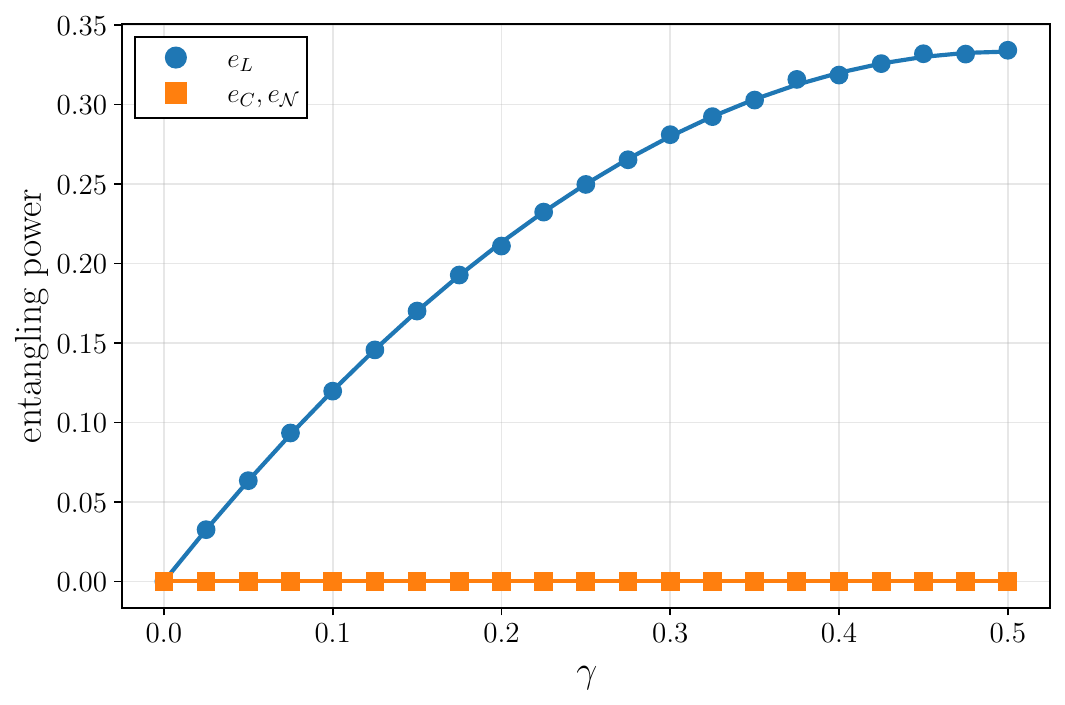}
	\caption{
		False-positive behavior of the linear-entropy quantity for the correlated
		dephasing channel \(\Phi_u\), with \(\gamma=(1-u)/2\) --see Eq. \eqref{eq:globalPD_randomU}. The channel is a
		random-unitary mixture of product unitaries, so it cannot entangle product
		inputs and \(e_C=e_{\mathcal N}=0\). Nevertheless \(e_L\) is positive because
		the output is locally mixed. Symbols denote numerical Haar averages over
		product inputs, while solid lines denote the analytic predictions.
	}
	\label{fig:correlated_dephasing}
\end{figure}
Equivalently, it dephases coherences between the two eigenspaces of
\(Z\otimes Z\). Since Eq.~\eqref{eq:globalPD_randomU} is a mixture of product
unitaries, it cannot entangle product inputs. Hence
\begin{equation}
	e_C(\Phi_u)=e_{\mathcal N}(\Phi_u)=0.
\end{equation}
Nevertheless,
\begin{equation}
	e_L(\Phi_u)
	=
	\frac{4}{3}\gamma(1-\gamma)
	=
	\frac{1-u^2}{3},
\end{equation}
as shown in Fig.~\ref{fig:correlated_dephasing}. The same channel has
\begin{equation}\label{eq:fid-corr-deph}
	\Favg(\Phi_u)=\frac35+\frac25 u,
	\qquad
	\Fprod(\Phi_u)=\frac59+\frac49 u,
\end{equation}
and therefore
\begin{equation}\label{eq:chi-corr-deph}
	\chi_F(\Phi_u)
	=
	\frac{2}{45}(1-u)
	=
	\frac{4\gamma}{45}
	>0 .
\end{equation}
Thus correlated dephasing is non-entangling on product inputs, but it preserves
lower-purity input orbits better than product inputs.

This behavior should be contrasted with local product noise. For independent
local phase flips,
\begin{equation}
	\Phi_\gamma^{Z,{\rm loc}}
	=
	\Lambda_\gamma\otimes\Lambda_\gamma,
	\qquad
	\Lambda_\gamma(\rho)=(1-\gamma)\rho+\gamma Z\rho Z,
\end{equation}
one obtains
\begin{align}
	e_C(\Phi_\gamma^{Z,{\rm loc}})
	&=
	e_{\mathcal N}(\Phi_\gamma^{Z,{\rm loc}})
	=
	0,\nonumber
	\\e_L(\Phi_\gamma^{Z,{\rm loc}})
	&=
	\frac{4}{3}\gamma(1-\gamma),
\end{align}
and
\begin{equation}
	\chi_F(\Phi_\gamma^{Z,{\rm loc}})
	=
	\frac{4\gamma(4\gamma-3)}{45}.
\end{equation}
For the physically relevant weak- and moderate-noise regime
and up to \(\gamma\le3/4\), this is nonpositive, in agreement with
Proposition~\ref{prop:product-channel-chi}. The sign change for
\(\gamma>3/4\) is not a contradiction: in this regime the local channel is
closer to a deterministic unitary phase flip than to the identity.

\begin{figure}[h]
	\centering
	\includegraphics[width=\columnwidth]{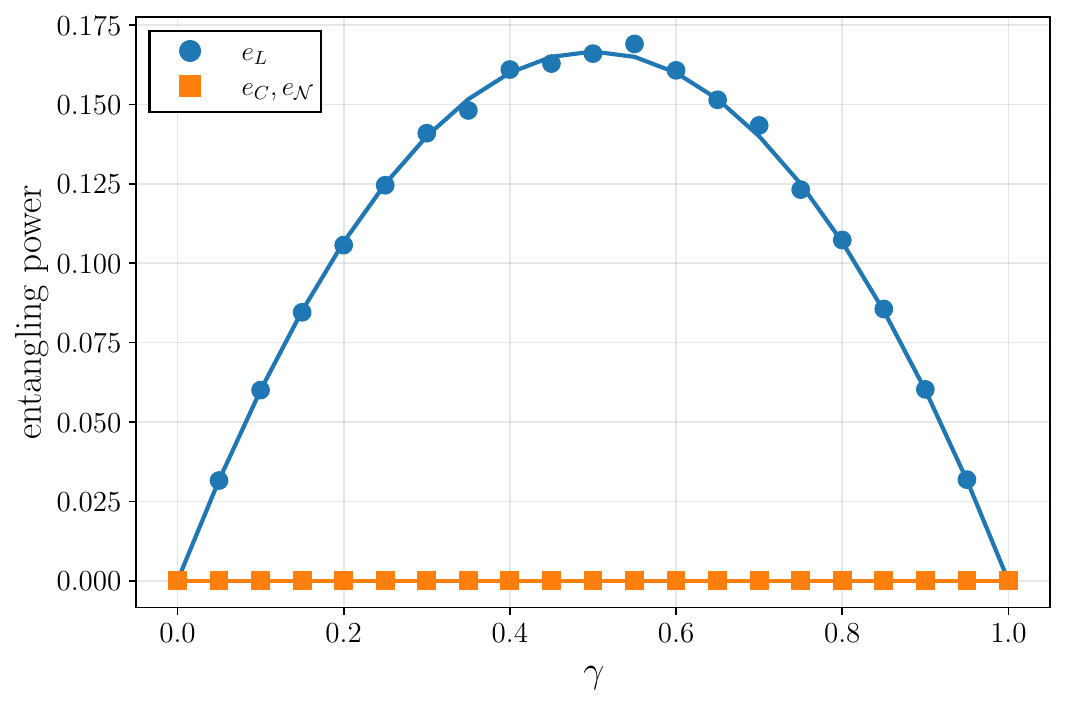}
		\caption{
			False-positive behavior of \(e_L\) for local amplitude damping
			\(\Phi_\gamma^{\rm AD}\). The channel is local, hence
			\(e_C=e_{\mathcal N}=0\), but \(e_L=2\gamma(1-\gamma)/3\) is nonzero
			because it measures local impurity of the mixed output.
		}
	\label{fig:local_amplitude_damping}
\end{figure}
For independent local amplitude damping,
\begin{equation}
	\Phi_\gamma^{\rm AD}
	=
	\Lambda_\gamma^{\rm AD}\otimes \Lambda_\gamma^{\rm AD},
\end{equation}
with single-qubit Kraus operators
\begin{equation}
	A_0=
	\begin{pmatrix}
		1&0\\
		0&\sqrt{1-\gamma}
	\end{pmatrix},
	\qquad
	A_1=
	\begin{pmatrix}
		0&\sqrt{\gamma}\\
		0&0
	\end{pmatrix},
\end{equation}
we define
\begin{equation}
	q(\gamma):=1+\sqrt{1-\gamma}.
\end{equation}
Then
\begin{equation}
	e_C(\Phi_\gamma^{\rm AD})
	=
	e_{\mathcal N}(\Phi_\gamma^{\rm AD})
	=
	0,
	\qquad
	e_L(\Phi_\gamma^{\rm AD})
	=
	\frac{2}{3}\gamma(1-\gamma),
\end{equation}
and
\begin{equation}
	\chi_F(\Phi_\gamma^{\rm AD})
	=
	-
	\frac{(q(\gamma)^2-1)(4-q(\gamma)^2)}{45}
	\le0.
\end{equation}
Thus local amplitude damping is non-entangling on product inputs and favors
product-like input states at the level of the fidelity-bias diagnostic --see
Fig.~\ref{fig:local_amplitude_damping}.

\begin{figure}[h]
	\centering
	\includegraphics[width=\columnwidth]{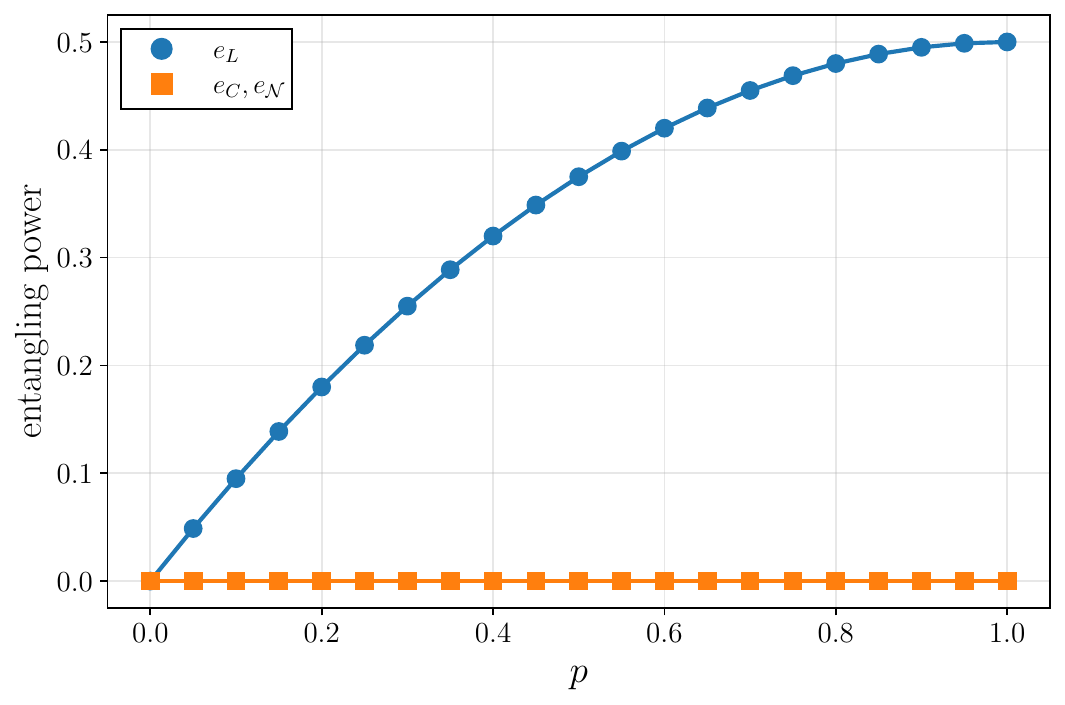}
		\caption{
			False-positive behavior of \(e_L\) for local depolarizing noise
			\(\mathcal D_p\otimes\mathcal D_p\). Proper entangling powers vanish for all
			\(p\), while \(e_L=p-p^2/2\) increases with the local mixedness of the output.
		}
	\label{fig:local_depolarizing_noise}
\end{figure}
For local depolarizing noise,
\begin{equation}
	\Phi_p^{\rm loc}
	=
	\mathcal D_p\otimes\mathcal D_p,
	\qquad
	\mathcal D_p(\rho)=(1-p)\rho+p\frac{\id}{2},
\end{equation}
one has
\begin{equation}
	e_C(\Phi_p^{\rm loc})
	=
	e_{\mathcal N}(\Phi_p^{\rm loc})
	=
	0,
	\qquad
	e_L(\Phi_p^{\rm loc})
	=
	p-\frac{p^2}{2},
\end{equation}
and
\begin{equation}
	\chi_F(\Phi_p^{\rm loc})
	=
	\frac{p(p-1)}{5}
	\le0 .
\end{equation}
This gives another local-noise example in which \(e_L\) is positive although no
entanglement is generated from product inputs --see
Fig.~\ref{fig:local_depolarizing_noise}.

Finally, the global depolarizing channel
\begin{equation}
	\Phi_p^{\rm glob}(\rho)
	=
	(1-p)\rho+p\frac{\id}{4}
\end{equation}
also has
\begin{equation}
	e_C(\Phi_p^{\rm glob})
	=
	e_{\mathcal N}(\Phi_p^{\rm glob})
	=
	0.
\end{equation}
However,
\begin{equation}
	\Favg(\Phi_p^{\rm glob})
	=
	\Fprod(\Phi_p^{\rm glob})
	=
	1-\frac{3p}{4},
\end{equation}
so
\begin{equation}
	\chi_F(\Phi_p^{\rm glob})=0.
\end{equation}
Thus global depolarization is purity-blind in the sense of the orbit-averaged
fidelity diagnostic.

\subsection{Coherent entangling dynamics}
\label{subsec:coherent-entangling-dynamics}

We next consider a genuinely entangling unitary family. The control-phase
gate
\begin{equation}
	\mathrm{CP}(\varphi)
	=
	\mathrm{diag}(1,1,1,e^{i\varphi})
	\label{eq:controlled-phase-gate}
\end{equation}
acts on a product input
\[
\ket{\psi_A}=a\ket0+b\ket1,
\qquad
\ket{\psi_B}=c\ket0+d\ket1
\]
as
\begin{equation}
	\ket{\Psi_{\rm out}}
	=
	ac\ket{00}
	+
	ad\ket{01}
	+
	bc\ket{10}
	+
	e^{i\varphi}bd\ket{11}.
\end{equation}
The output concurrence is
\begin{equation}
	C
	=
	4|abcd|\left|\sin\frac{\varphi}{2}\right|.
\end{equation}
Averaging over product Haar inputs gives
\begin{align}
	e_C(\Ad_{\mathrm{CP}(\varphi)})
	&=
	\frac{\pi^2}{16}
	\left|\sin\frac{\varphi}{2}\right|,\label{eq:CP_C}
	\\
	e_{\mathcal N}(\Ad_{\mathrm{CP}(\varphi)})
	&=
	\frac{\pi^2}{32}
	\left|\sin\frac{\varphi}{2}\right|,\label{eq:CP_N}
	\\
	e_L(\Ad_{\mathrm{CP}(\varphi)})
	&=
	\frac{2}{9}\sin^2\frac{\varphi}{2}.\label{eq:CP_L}
\end{align}
Here the outputs are pure, so \(e_{\mathcal N}=e_C/2\) pointwise and
\(e_L\) is directly related to \(C^2\). This is why all three quantities vanish
and peak at the same values of \(\varphi\), as shown in
Fig.~\ref{fig:control_phase}. For $\varphi=\pi$, the control-phase gate reduces to $\mathrm{CZ}$ gate (control-Z gate). 

\begin{figure}[h]
	\centering
	\includegraphics[width=\columnwidth]{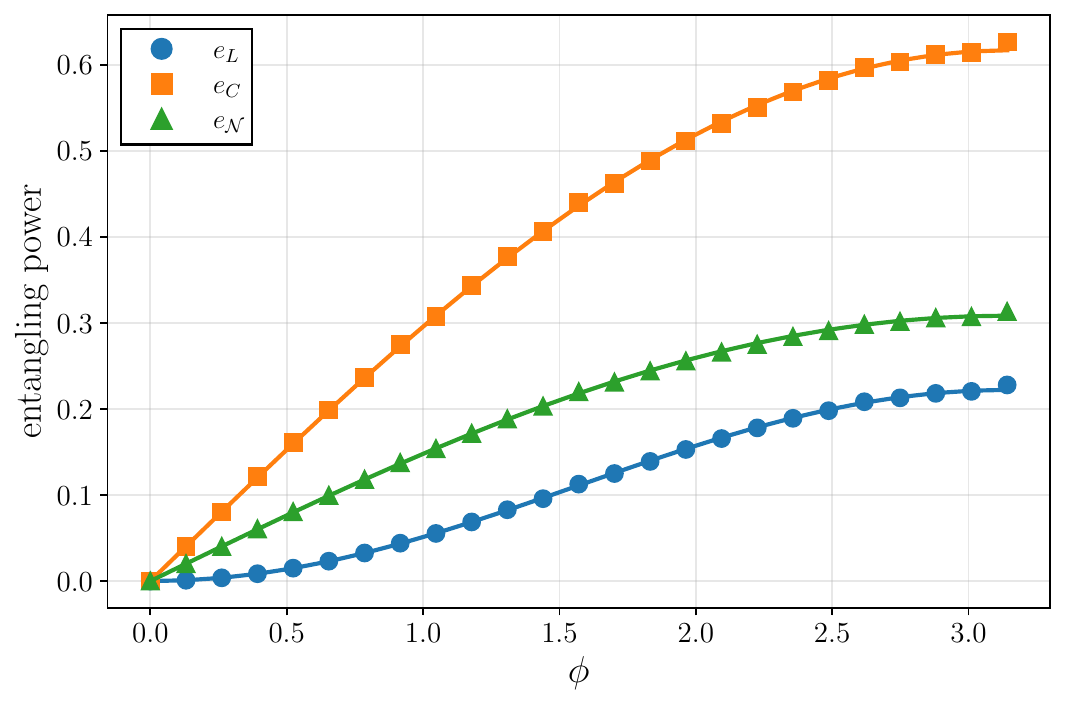}
	\caption{
		Controlled-phase unitary benchmark. Since the output is pure,
		\(e_{\mathcal N}=e_C/2\) and \(e_L\) is proportional to the average squared
		concurrence. The concurrence- and negativity-based entangling powers scale as
		\(\left|\sin(\varphi/2)\right|\), whereas \(e_L\) scales as
		\(\sin^2(\varphi/2)\). Symbols denote numerical Haar averages and solid lines
		denote analytic expressions.
	}
	\label{fig:control_phase}
\end{figure}

The fidelity bias for control-phase gate reads
\begin{equation}
	\chi_F(\Ad_{\mathrm{CP}(\varphi)})
	=
	-\frac{2}{45}\sin^2\frac{\varphi}{2}
	\le0.
\end{equation}
Thus the controlled-phase family illustrates an important point: a channel may
generate entanglement from product states while having a negative fidelity bias
with respect to the identity reference channel.

\subsection{Noisy entangling gate: CZ under correlated dephasing}
\label{subsec:CZ_corr_deph}

The previous two regimes meet in a noisy entangling gate. Consider an ideal CZ
gate followed by correlated dephasing,
\begin{equation}
	\Phi_u^{\rm CZ}
	=
	\Phi_u\circ\Ad_{\mathrm{CZ}},
	\qquad
	\mathrm{CZ}=\mathrm{diag}(1,1,1,-1).
	\label{eq:cz-corr-deph}
\end{equation}
Since both the ideal gate and the noise are diagonal in the computational basis,
they commute. Therefore the relative-fidelity quantities with respect to the
target \(\Ad_{\mathrm{CZ}}\) are exactly those of the correlated dephasing error --see Eq.~\eqref{eq:fid-corr-deph}-\eqref{eq:chi-corr-deph}.
Thus, at the level of the relative-fidelity diagnostic with respect to the
ideal CZ target, the error favors lower-purity input orbits in the same sense as the correlated dephasing channel itself.

\begin{figure}[h]
	\centering
	\includegraphics[width=\columnwidth]{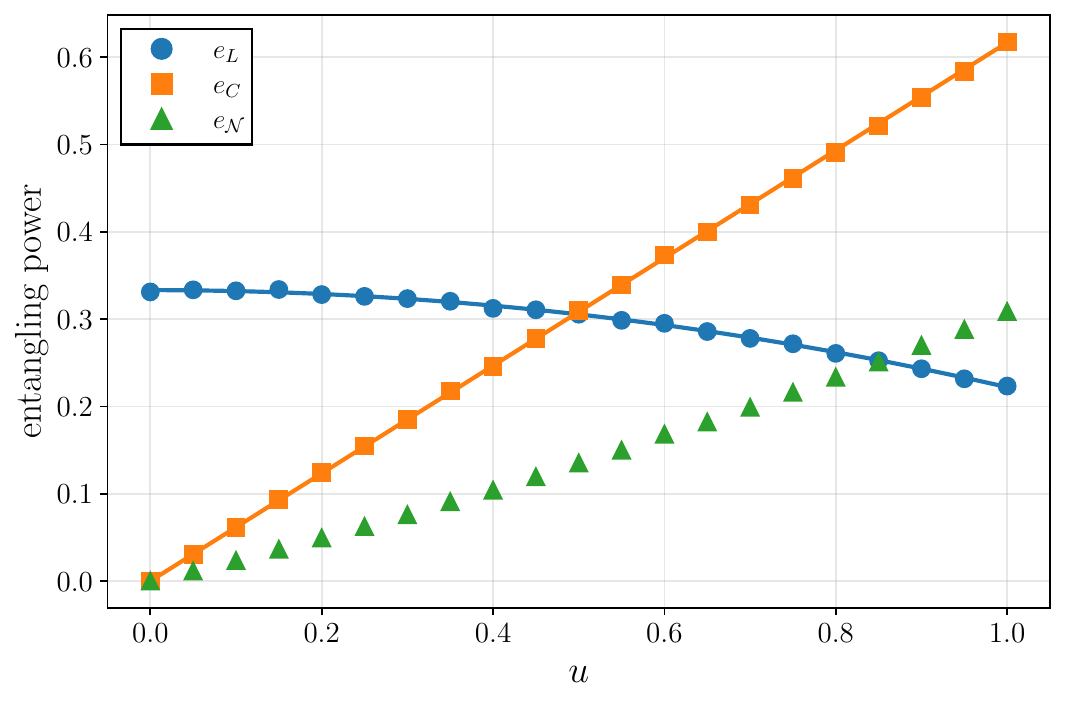}
		\caption{Entangling gate CZ with noise represented by correlated dephasing. The
			concurrence-based entangling power scales linearly,
			\(e_C(\Phi_u^{\rm CZ})=u\pi^2/16\). The negativity-based entangling power \(e_{\mathcal N}\) is
			estimated numerically.
			The linear-entropy based quantity \(e_L\) is shown for comparison.
		}
	\label{fig:cz-correlated-dephasing}
\end{figure}

At the same time, the channel remains entangling for \(u>0\). Correlated
dephasing rescales the output concurrence pointwise by \(u\), as shown in
Appendix~\ref{app:concurrence-rescaling}. Hence
\begin{equation}
	e_C(\Phi_u^{\rm CZ})
	=
	u\,e_C(\Ad_{\mathrm{CZ}})
	=
	u\frac{\pi^2}{16}.
\end{equation}
The linear-entropy quantity is
\begin{equation}
	e_L(\Phi_u^{\rm CZ})
	=
	\frac13-\frac{u^2}{9}.
\end{equation}
Unlike in the noiseless controlled-phase case, the outputs are now generally mixed. Therefore \(e_L\) no longer measures entanglement alone, but it also contains contributions from output mixedness. The negativity-based entangling power remains a proper mixed-state entanglement diagnostic and is shown numerically in Fig.~\ref{fig:cz-correlated-dephasing}.

Combining the concurrence formula with the relative average fidelity gives
\begin{equation}
	e_C(\Phi_u^{\rm CZ})
	=
	\frac{\pi^2}{32}
	\left(
	5\Favg(\Phi_u^{\rm CZ},\Ad_{\mathrm{CZ}})-3
	\right).
\end{equation}
The same parameter
\(u\) controls both the relative fidelity and the concurrence damping.

\begin{figure}[h]
	\centering
	\includegraphics[width=\columnwidth]{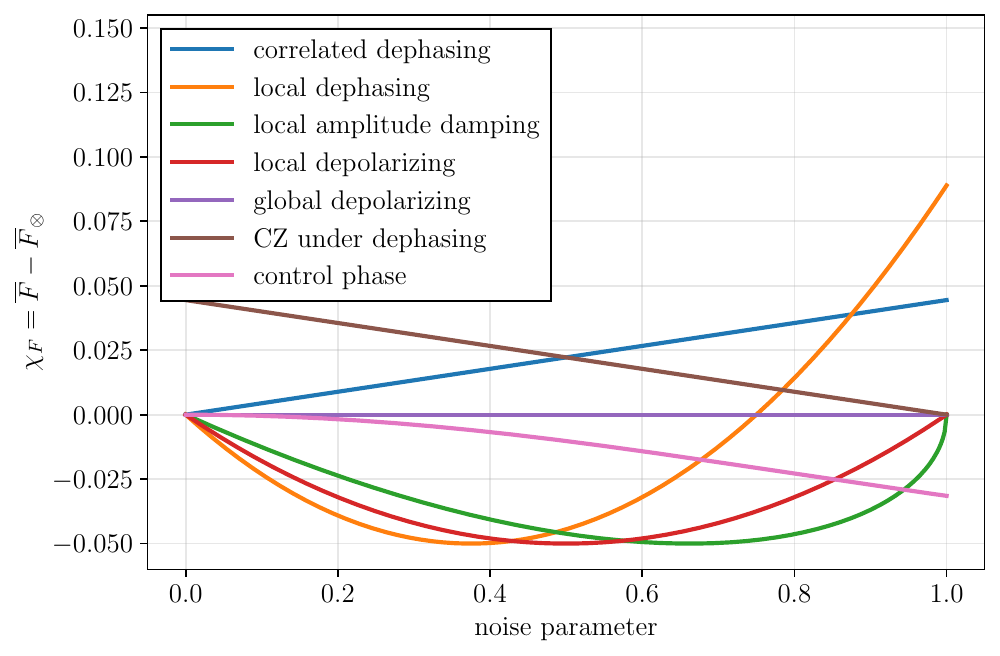}
	\caption{
		Fidelity-bias parameter \(\chi_F\) for the representative models.
		Positive values indicate that lower-purity input orbits are better preserved on
		average than product inputs, while negative values indicate the opposite.
		The local noise models are nonpositive in the usual weak-noise regime,
		correlated dephasing has \(\chi_F>0\), and global depolarization is
		purity-blind with \(\chi_F=0\). For the noisy-CZ curve, \(\chi_F\) is computed relative to the ideal CZ target;
		for the other noise channels it is computed relative to the identity channel.
	}
	\label{fig:chi}
\end{figure}
Figure~\ref{fig:chi} summarizes the fidelity-bias parameter for all discussed
models.

\section{Bounds for concurrence-based entangling power}
\label{sec:bounds}

The concurrence-based entangling power \(e_C(\Phi)\) is a genuine
entanglement-generation diagnostic, but it is generally difficult to evaluate analytically. 
The difficulty comes from the nonlinear dependence of concurrence on the output state. In this section we derive computable analytic bounds on \(e_C(\Phi)\). The main idea is to use the squared concurrence, called tangle $\tau(\rho):=C^2(\rho)$ \cite{Wootters1998}, as an auxiliary quantity for that purpose.
The corresponding concurrence second moment is
\begin{equation}
	e_{C^2}(\Phi)
	:=
	\int d\psi_A d\psi_B\,
	C^2\!\left(
	\Phi(\ketbra{\psi_A\psi_B}{\psi_A\psi_B})
	\right).
	\label{eq:etau}
\end{equation}
Since \(0\le C(\rho)\le 1\) for two qubits, one has pointwise
\(C(\rho)^2\le C(\rho)\). On the other hand, the function
\(x\mapsto x^2\) is convex, so Jensen's inequality gives
\[
\left(\int d\psi_A d\psi_B\, C(\rho_{\psi_A,\psi_B})\right)^2
\le
\int d\psi_A d\psi_B\, C(\rho_{\psi_A,\psi_B})^2,
\]
where
\[
\rho_{\psi_A,\psi_B}
:=
\Phi(\ketbra{\psi_A\psi_B}{\psi_A\psi_B}) .
\]
Therefore, we arrive at the bounds,
\begin{equation}
	e_C(\Phi)^2
	\le
	e_{C^2}(\Phi)
	\le
	e_C(\Phi).
	\label{eq:eC-etau-relations}
\end{equation}
Thus lower bounds on \(e_{C^2}\) also lower-bound \(e_C\), while upper bounds on
\(e_{C^2}\) give upper bounds on \(e_C\) after taking a square root.

We now use the known two-qubit bounds on the tangle \cite{Mintert07,Zhang08},
\begin{equation}
	2\delta_P(\rho)
	\le
	\tau(\rho)
	\le
	2E_L(\rho),
	\label{eq:state-tangle-bounds}
\end{equation}
where
\begin{equation}
	\delta_P(\rho_{AB}):=
	\Tr(\rho_{AB}^2)-\Tr(\rho_A^2),
	\qquad
	\rho_A=\Tr_B\rho_{AB}.
\end{equation}
The quantity \(\delta_P(\rho)\) may be negative. 
Note that the same construction may be repeated with \(A\) and \(B\) interchanged. Thus,
for asymmetric channels one can use alternative providing stronger lower bound.

We recall that $\tau(\rho)=C^2(\rho)$. After averaging over pure
product inputs we obtain
\begin{equation}
	2\overline{\delta_P}^{\otimes}(\Phi)
	\le
	e_{C^2}(\Phi)
	\le
	2e_L(\Phi),
	\label{eq:etau-bounds}
\end{equation}
where
\begin{equation}
	\overline{\delta_P}^{\otimes}(\Phi)
	:=
	\int d\psi_A d\psi_B\,
	\delta_P\!\left(
	\Phi(\ketbra{\psi_A\psi_B}{\psi_A\psi_B})
	\right).
	\label{eq:def-average-delta-product}
\end{equation}
Combining Eqs.~\eqref{eq:eC-etau-relations} and
\eqref{eq:etau-bounds} gives direct bounds on the concurrence-based entangling
power:
\begin{equation}
	2\max\left\{0,\overline{\delta_P}^{\otimes}(\Phi)\right\}
	\le
	e_C(\Phi)
	\le
	\sqrt{2e_L(\Phi)}.
	\label{eq:eC-direct-bounds}
\end{equation}
The lower bound is shown to give a useful certificate and, in the examples
considered, a finer estimate of concurrence-based entangling power than the corresponding upper bound.
If \(\overline{\delta_P}^{\otimes}(\Phi)>0,\) then \(e_C(\Phi)>0\). Since, for two-qubit states, concurrence and negativity
vanish on exactly the same separable states, this also certifies
\(e_{\mathcal N}(\Phi)>0\). However, the present purity-difference method does
not provide a comparable quantitative lower bound on \(e_{\mathcal N}\).
Using \(2\mathcal N(\rho)\le C(\rho)\), it does give the following upper bound
\begin{equation}
	e_{\mathcal N}(\Phi)
	\le
	\frac12 e_C(\Phi)
	\le
	\frac12\sqrt{2e_L(\Phi)}.
	\label{eq:eN-upper-from-eL}
\end{equation}

It remains to compute
\(\overline{\delta_P}^{\otimes}(\Phi)\). This can be done analytically by
using a two-copy representation. Let \(S_{AA'}\) denotes the swap
operator between subsystem \(A\) and its copy \(A'\). It acts as identity for the reminder of entire Hilbert space. \(S_{BB'}\) is defined analogously. For product Haar inputs define \cite{Zanardi2000, CollinsSniady2006}
\begin{equation}
	\begin{split}
		\Omega_{\otimes}
		&:=
		\int d\psi_A d\psi_B\,
		\left(
		\ketbra{\psi_A}{\psi_A}
		\otimes
		\ketbra{\psi_B}{\psi_B}
		\right)^{\otimes 2}
		\\
		&=
		\frac{1}{d^2(d+1)^2}
		\left(\mathbbm 1+S_{AA'}\right)
		\left(\mathbbm 1+S_{BB'}\right).
	\end{split}
	\label{eq:Omega-product}
\end{equation}

The bounds above are used here for two-qubit channels. The following two-copy
formula for \(\overline{\delta_P}^{\otimes}\), however, is purely algebraic and
holds for equal local dimension \(d\).

\begin{proposition}
	\label{prop:delta-P-tensor}
	Let
	\[
	\Phi(\rho)=\sum_\alpha K_\alpha\rho K_\alpha^\dagger
	\]
	be a CPTP map on \(\mathbb C^d\otimes\mathbb C^d\). Then
	\begin{align}
		\overline{\delta_P}^{\otimes}(\Phi)=
		\sum_{\alpha,\beta}
		\Tr \big[
		(K_\alpha\otimes K_\beta)
		\Omega_{\otimes} (K_\alpha^\dagger\otimes K_\beta^\dagger)\nonumber
		\\ \times\left(S_{AA'}S_{BB'}-S_{AA'}\right)\big].
		\label{eq:av-delta-p-prod}
	\end{align}
	Equivalently,
	\begin{align}
		\overline{\delta_P}^{\otimes}(\Phi)
		=
		\sum_{\alpha,\beta}
		\Tr\big[
		(K_\alpha\otimes K_\beta)
		\Omega_{\otimes}
		(K_\alpha^\dagger\otimes K_\beta^\dagger)
		\nonumber
		\\ \times(S_{BB'}-\mathbbm 1)S_{AA'}
		\big].
	\end{align}
\end{proposition}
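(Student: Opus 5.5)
The plan is to rewrite both purities appearing in $\delta_P$ as two-copy expectation values, and then pull the Haar average inside using linearity. Two standard swap-trick identities are the starting point. For any operator $X$ on $\mathcal H_{AB}$,
\[
\Tr(X^2)=\Tr\big[(X\otimes X)S_{AA'}S_{BB'}\big].
\]
The reduced state $X_A=\Tr_B X$ satisfies
\[
\Tr(X_A^2)=\Tr\big[(X\otimes X)S_{AA'}\big],
\]
since $S_{AA'}$ acts trivially on $B,B'$, so the partial traces over $B$ and $B'$ are taken automatically inside the full trace. Applying both identities to $\rho_{\psi_A,\psi_B}=\Phi(\ketbra{\psi_A\psi_B}{\psi_A\psi_B})$ gives
\[
\delta_P(\rho_{\psi_A,\psi_B})=\Tr\big[(\rho_{\psi_A,\psi_B}\otimes\rho_{\psi_A,\psi_B})(S_{AA'}S_{BB'}-S_{AA'})\big].
\]

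Next, expand $\rho\otimes\rho$ using the Kraus form:
\[
\rho\otimes\rho=\sum_{\alpha,\beta}(K_\alpha\otimes K_\beta)\,\big(\ketbra{\psi_A\psi_B}{\psi_A\psi_B}\big)^{\otimes 2}\,(K_\alpha^\dagger\otimes K_\beta^\dagger).
\]
Here the second tensor factor is identified with the primed copy $A'B'$, so one must reorder $A\otimes B\otimes A'\otimes B'$ consistently with $(\ketbra{\psi_A}{\psi_A}\otimes\ketbra{\psi_B}{\psi_B})^{\otimes 2}$. The map $\sigma\mapsto\sum_{\alpha\beta}(K_\alpha\otimes K_\beta)\sigma(K_\alpha^\dagger\otimes K_\beta^\dagger)$ and the trace against a fixed operator are both linear. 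The integral over $d\psi_A\,d\psi_B$ therefore passes through them, and the only input dependence is replaced by its average $\Omega_\otimes$ from Eq.~\eqref{eq:Omega-product}. This yields Eq.~\eqref{eq:av-delta-p-prod} directly. The explicit form of $\Omega_\otimes$ is not even needed for the identity itself; it matters only for later evaluation.

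For the equivalent form, I would note that $S_{AA'}$ and $S_{BB'}$ act on disjoint tensor factors and hence commute. This gives
\[
S_{AA'}S_{BB'}-S_{AA'}=(S_{BB'}-\id)S_{AA'}.
\]

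The only step that needs care is the swap-trick for the reduced purity, together with the bookkeeping of the tensor-factor ordering. In practice I would verify $\Tr[(X\otimes X)S_{AA'}]=\Tr(X_A^2)$ in a product basis $\ket{i j}\otimes\ket{k l}$, with $S_{AA'}\ket{ij}\ket{kl}=\ket{kj}\ket{il}$. Summing over $j,l$ produces the partial traces over $B$ and $B'$, and the remaining swap contraction gives $\Tr(X_A^2)$. Nothing beyond linearity and this index computation is required.
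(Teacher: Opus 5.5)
Your proposal is correct and follows the paper's proof essentially step for step: swap-trick expressions for the global and reduced purities, Kraus expansion of $\rho\otimes\rho$, and linearity of the Haar average to replace the two-copy input by $\Omega_{\otimes}$. Your remark that $S_{AA'}$ and $S_{BB'}$ commute is what justifies the equivalent form, which the paper states without comment.
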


\begin{proof}
	For any bipartite state \(\rho_{AB}\), the swap trick gives
	\begin{equation}
		\Tr(\rho_{AB}^2)
		=
		\Tr\!\left[
		(\rho_{AB}\otimes\rho_{AB})S_{AA'}S_{BB'}
		\right],
	\end{equation}
	and
	\begin{equation}
		\Tr(\rho_A^2)
		=
		\Tr\!\left[
		(\rho_{AB}\otimes\rho_{AB})S_{AA'}
		\right].
	\end{equation}
	Hence
	\begin{equation}
		\delta_P(\rho_{AB})
		=
		\Tr\!\left[
		(\rho_{AB}\otimes\rho_{AB})
		\left(S_{AA'}S_{BB'}-S_{AA'}\right)
		\right].
	\end{equation}
	Substituting
	\[
	\rho_{AB}
	=
	\Phi(\ketbra{\psi_A\psi_B}{\psi_A\psi_B})
	=
	\sum_\alpha
	K_\alpha
	\ketbra{\psi_A\psi_B}{\psi_A\psi_B}
	K_\alpha^\dagger
	\]
	and averaging over product Haar inputs gives Eq.~\eqref{eq:av-delta-p-prod},
	because the averaged two-copy input state is precisely
	\(\Omega_{\otimes}\).
\end{proof}

In Fig. \ref{fig:lower_bounds}, we show comparison of the concurrence-based entangling power $e_C$ and its bounds, as given by Eq. (\ref{eq:eC-direct-bounds}),  for different noise models, which we discuss below.

\textit{Mixed unitary channel.}
We first consider a mixed-unitary channel with Kraus operators
\(K_0=\sqrt p\,U_1\) and \(K_1=\sqrt{1-p}\,U_2\), where \(U_1\) and \(U_2\)
are Haar-random unitaries on \(U(4)\), and \(p\in[0,1]\).
The detailed dependence on \(p\) depends on the chosen
unitaries, but the bounds capture the suppression of
concurrence-based entangling power under incoherent
mixing.

\textit{Phase damping channel} applied to two-qubit systems corresponds to Kraus operators 
\begin{equation}
	K_0 = \mathrm{diag}(e^{-\Gamma t^2/2},1,1,e^{-\Gamma t^2/2}),
	\label{kraus_damping}
\end{equation}
and
\begin{equation}
	K_1=\mathrm{diag}(\sqrt{1-e^{-\Gamma t^2}},0,0,\sqrt{1-e^{-\Gamma t^2}}),
	\label{kraus_damping_2}
\end{equation}
where we set $\Gamma = 1$.
This is nothing but the correlated dephasing channel of Section~\ref{sec:case-studies} after the change of variable, $u=e^{-\Gamma t^2/2}$. As discussed above, this channel is separable and thus has vanishing entangling power. The lower bound correctly captures this characteristic.

\textit{Control-Z under phase damping}, corresponding to Kraus operators 
\begin{equation}
	K^{CZ}_0 = e^{-ig t}\mathrm{diag}(e^{-\Gamma t^2/2},1,1,e^{2ig t}e^{-\Gamma t^2/2}),
	\label{kraus_damping_cz}
\end{equation}
\begin{equation}
	K^{CZ}_1=\mathrm{diag}(e^{-igt}\sqrt{1-e^{-\Gamma t^2}},0,0,e^{igt}\sqrt{1-e^{-\Gamma t^2}}),
	\label{kraus_damping_cz2}
\end{equation}
with parameters set as $g=3/2$ and $\Gamma=1$.
This channel describes an application of the unitary gate 
\begin{equation}
	U_{g,t}=e^{-igt}\mathrm{diag}(1,1,1,e^{2igt}),
\end{equation}
—which is not entangling for $gt=k\pi$ and it is maximally entangling for $gt=\pi/2+k\pi$, being equivalent to the CZ gate—under the phase damping noise. 
For the parameters used in Fig.~\ref{fig:lower_bounds}, the noiseless maximum
near \(t=\pi/(2g)\) is reduced and shifted to smaller times by the damping.
 Interestingly, the lower bound is capable of describing the recession of the maximum due to the effect of noise.

\begin{figure*}[t]
	\centering
	\begin{subfigure}{0.32\textwidth}
		\centering
		\includegraphics[width=\linewidth]{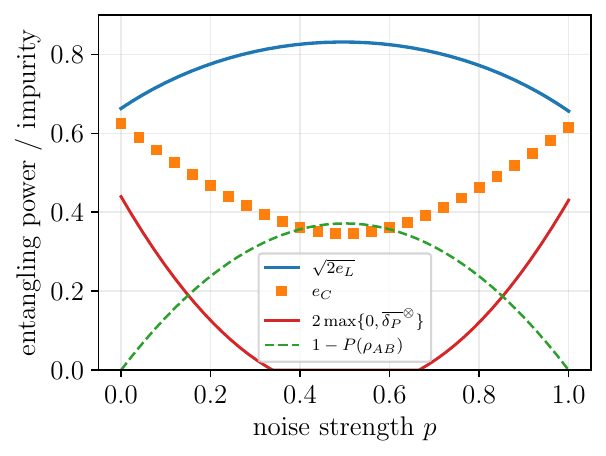}
		\caption{Mixed unitary}
		\label{fig:mixed_unitary_lower}
	\end{subfigure}\hfill
	\begin{subfigure}{0.32\textwidth}
		\centering
		\includegraphics[width=\linewidth]{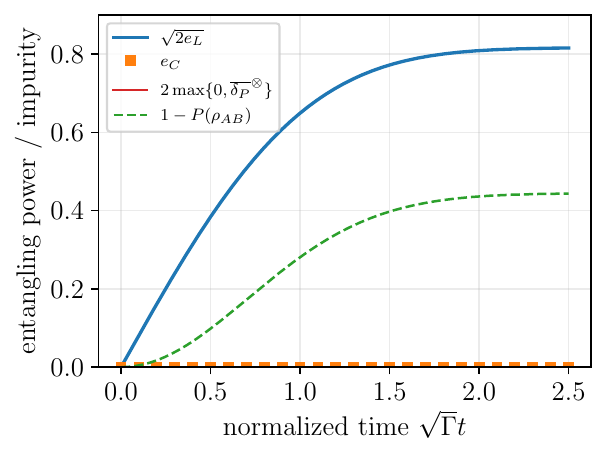}
		\caption{Phase damping noise}
		\label{fig:damping_noise_lower}
	\end{subfigure}\hfill
	\begin{subfigure}{0.32\textwidth}
		\centering
		\includegraphics[width=\linewidth]{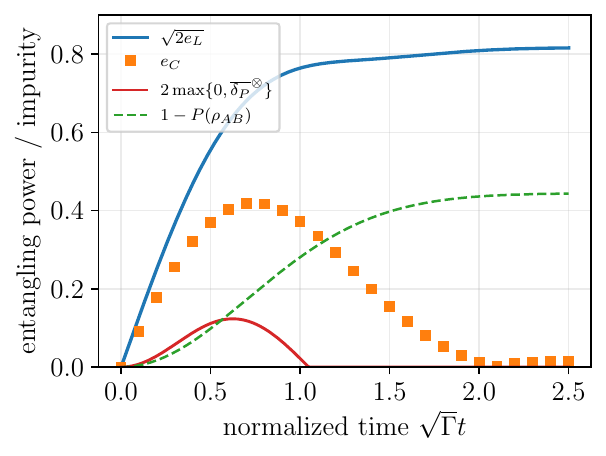}
		\caption{CZ under phase damping}
		\label{fig:damping_cz_lower}
	\end{subfigure}
	
	\caption{Comparison between the concurrence-based entangling power \(e_C\), computed numerically over Haar-random product inputs, and the analytic bounds from Eq.~\eqref{eq:eC-direct-bounds}. The green dashed curve shows the averaged global output impurity. Panel (a) shows a mixed-unitary channel; panel (b) shows correlated phase damping, for which the proper entangling power vanishes; panel (c) shows controlled-phase evolution under correlated phase damping.	}
	\label{fig:lower_bounds}
\end{figure*}

\section{Average entanglement variation over fixed Schmidt orbits}
\label{sec:variation}

We now extend the product-input framework to initially entangled pure inputs.
For two qubits, we fix an input Schmidt angle \(\theta\) and average over the
corresponding local-unitary orbit. Namely, we consider states of the form
\begin{equation}
	\ket{\psi_\theta(U_A,U_B)}
	=
	(U_A\otimes U_B)
	\left(
	\cos\theta\,\ket{00}
	+
	\sin\theta\,\ket{11}
	\right).
	\label{eq:psi-theta-orbit}
\end{equation}
The product-input case is recovered at \(\theta=0\), while
\(\theta=\pi/4\) corresponds to maximally entangled inputs.

For an entanglement measure \(M\), define the orbit-averaged output
entanglement
\begin{align}
	\overline{M}_{\rm out}(\Phi,\theta)
	&:=
	\int dU_A dU_B\,
	\nonumber\\&\times M\!\left(
	\Phi(\ketbra{\psi_\theta(U_A,U_B)}{\psi_\theta(U_A,U_B)})
	\right).
	\label{eq:orbit-average-M-out}
\end{align}
The corresponding average entanglement variation is
\begin{equation}
	\Delta e_M(\Phi,\theta)
	:=
	\overline{M}_{\rm out}(\Phi,\theta)
	-
	M_{\rm in}(\theta),
	\label{eq:def-general-ent-variation}
\end{equation}
where
\[
M_{\rm in}(\theta)
=
M\!\left(
\ketbra{\psi_\theta(U_A,U_B)}{\psi_\theta(U_A,U_B)}
\right)
\]
is independent of \(U_A,U_B\).

For the initial pure state in Eq.~\eqref{eq:psi-theta-orbit}, one has
concurrence \(C_{\rm in}(\theta)=\sin 2\theta\), and negativity	\(\mathcal N_{\rm in}(\theta)=
	\frac12\sin 2\theta\). 
The concurrence- and negativity-based entanglement variations are
\begin{align}
	\Delta e_C(\Phi,\theta)
	&=
	\overline{C}_{\rm out}(\Phi,\theta)
	-
	\sin 2\theta,
	\label{eq:def-delta-eC-theta}
	\\
	\Delta e_{\mathcal N}(\Phi,\theta)
	&=
	\overline{\mathcal N}_{\rm out}(\Phi,\theta)
	-
	\frac12\sin 2\theta.
	\label{eq:def-delta-eN-theta}
\end{align}
They measure the average change of genuine two-qubit entanglement under the
channel. Positive values indicate average entanglement enhancement on the
chosen fixed-Schmidt-spectrum orbit, while negative values indicate average
entanglement degradation.

Using the tangle \(
\tau(\rho)=C(\rho)^2
\)
as an auxiliary squared-concurrence moment, we define
\begin{align}
	\overline{C^2}_{\rm out}(\Phi,\theta)
	&
	:=
	\int dU_A dU_B\,
	\nonumber\\&\times C^2\left(
	\Phi(\ketbra{\psi_\theta(U_A,U_B)}{\psi_\theta(U_A,U_B)})
	\right),
	\label{eq:def-average-tau-out-theta}
\end{align}
and
\begin{equation}
	\Delta e_{C^2}(\Phi,\theta)
	:=
	\overline{C^2}_{\rm out}(\Phi,\theta)
	-
	\sin^2 2\theta.
	\label{eq:def-delta-etau-theta}
\end{equation}
This quantity is useful because \(C^2\) admits computable purity bounds.
We also use the linear-entropy variation
\begin{equation}
	\Delta e_L(\Phi,\theta)
	:=
	\overline{E_L}_{\rm out}(\Phi,\theta)
	-
	E_{L,\rm in}(\theta),
	\label{eq:def-delta-eL-theta}
\end{equation}
where
\begin{align}
	\overline{E_L}_{\rm out}(\Phi,\theta)
	&:=
	\int dU_A dU_B\,
	\nonumber\\&\times E_L\!\left(
	\Phi(\ketbra{\psi_\theta(U_A,U_B)}{\psi_\theta(U_A,U_B)})
	\right).
\end{align}
Unlike \(C\) and \(\mathcal N\), this quantity is not a genuine mixed-state
entanglement variation. However, it provides an analytic
upper bound on the squared concurrence.

\begin{figure*}[t]
	\centering
	\begin{subfigure}{1\textwidth}
		\centering
		\includegraphics[width=\linewidth]{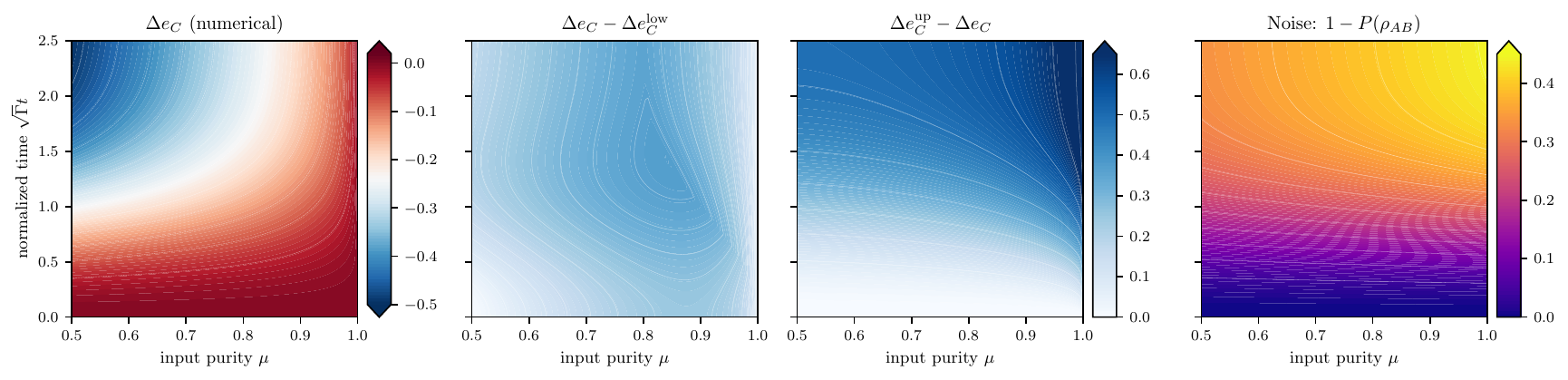}
	\end{subfigure}
	\caption{
		Average concurrence variation for the correlated phase-damping channel,	Eqs.~\eqref{kraus_damping}--\eqref{kraus_damping_2}, shown together with the tightness gaps of the
		analytic bounds. The first panel shows the numerical value of
		\(\Delta e_C\). The second panel shows
		\(\Delta e_C-\Delta e_C^{\rm low}\), and the third panel shows 
		\(\Delta e_C^{\rm up}-\Delta e_C\). Smaller gap values indicate tighter
		bounds. The final panel shows the averaged global output impurity
		\(1-P(\rho_{AB})\).
	}
	\label{fig:2d_damping_bounds}
\end{figure*}

\begin{figure*}[t]
	\centering
	\begin{subfigure}{1\textwidth}
		\centering
		\includegraphics[width=\linewidth]{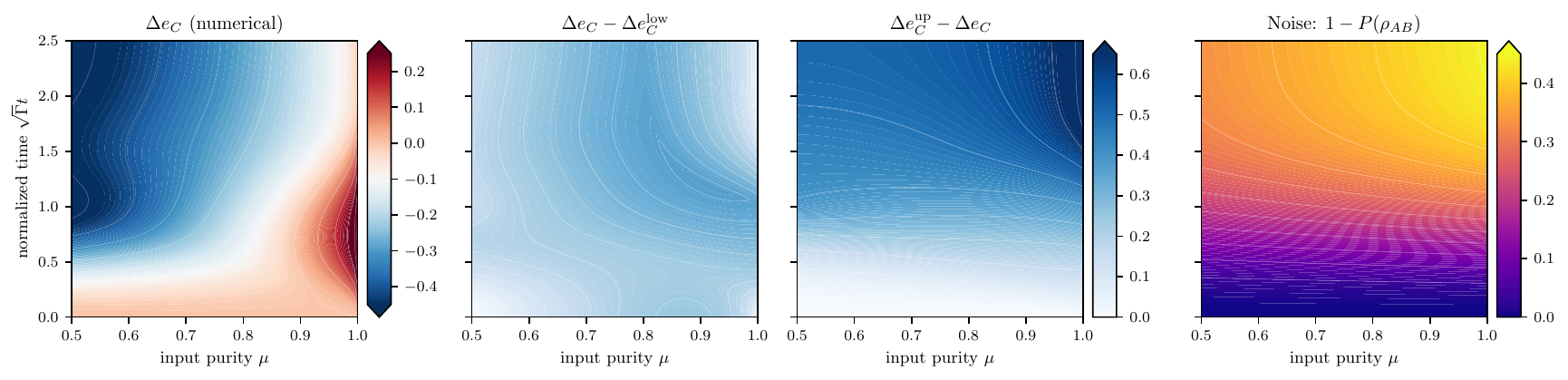}
	\end{subfigure}
	\caption{
		Average concurrence variation for the controlled-phase evolution under correlated phase damping --see Eqs.~\eqref{kraus_damping_cz}--\eqref{kraus_damping_cz2}, shown together with the tightness gaps of the
		analytic bounds. The first panel shows the numerical value of
		\(\Delta e_C\). The second panel shows
		\(\Delta e_C-\Delta e_C^{\rm low}\), and the third panel shows 
		\(\Delta e_C^{\rm up}-\Delta e_C\). Smaller gap values indicate tighter
		bounds. The final panel shows the averaged global output impurity
		\(1-P(\rho_{AB})\).}
	\label{fig:2d_cz_damping_bounds}
\end{figure*}

Recall \cite{Zanardi2000, CollinsSniady2006}
\begin{widetext}
	\begin{align}
			\Omega(\mu)
		&:=
		\int dU_A dU_B\,
		\left[
		(U_A\otimes U_B)
		\ketbra{\phi_{\vec\lambda}}{\phi_{\vec\lambda}}
		(U_A^\dagger\otimes U_B^\dagger)
		\right]^{\otimes 2}
		\nonumber\\ &=\frac{1}{d^2(d^2-1)^2}
		\left[
		\left(d^2+1-2\mu d\right)
		\left(\mathbbm 1+S_{AA'}S_{BB'}\right)
		+
		\left((d^2+1)\mu-2d\right)
		\left(S_{AA'}+S_{BB'}\right)
		\right].
		\label{eq:Omega-mu}
	\end{align}
\end{widetext}
Here \(S_{AA'}\) and \(S_{BB'}\) swap the corresponding subsystem with its copy. For the details --see Appendix~\ref{app:delta-eL:}.

\begin{proposition}
	\label{prop:delta-eL}
	Let
	\[
	\Phi(\rho)=\sum_\alpha K_\alpha\rho K_\alpha^\dagger
	\]
	be a CPTP map on \(\mathbb C^d\otimes\mathbb C^d\). The orbit-averaged
	linear entropy of the output is
	\begin{align}
		\overline{E_L}_{\rm out}(\Phi,\vec\lambda)
		=
		1-
		\sum_{\alpha,\beta}
		\Tr\!\big[
		(K_\alpha\otimes K_\beta)
		\Omega(\mu)
		\nonumber\\ \times
		(K_\alpha^\dagger\otimes K_\beta^\dagger)
		S_{AA'}
		\big].
		\label{eq:EL-out-Omega}
	\end{align}
	Consequently,
	\begin{align}
		\Delta e_L(\Phi,\vec\lambda)
		=
		\mu
		-
		\sum_{\alpha,\beta}
		\Tr\!\big[
		(K_\alpha\otimes K_\beta)
		\Omega(\mu)
		\nonumber\\ \times
		(K_\alpha^\dagger\otimes K_\beta^\dagger)
		S_{AA'}
		\big].
		\label{eq:delta-eL-Omega}
	\end{align}
\end{proposition}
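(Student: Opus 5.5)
The plan follows the proof of Proposition~\ref{prop:delta-P-tensor} almost verbatim, with the product-input second moment $\Omega_\otimes$ replaced by the orbit second moment $\Omega(\mu)$ of Eq.~\eqref{eq:Omega-mu}.

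First, we express the reduced purity through the swap trick. For any bipartite state $\rho_{AB}$ on $\mathbb C^d\otimes\mathbb C^d$ we have
\begin{equation*}
\Tr(\rho_A^2)=\Tr\!\left[(\rho_{AB}\otimes\rho_{AB})S_{AA'}\right].
\end{equation*}
This holds because $S_{AA'}$ acts trivially on $BB'$, so the partial traces over $B$ and $B'$ can be carried out first, leaving $\Tr[(\rho_A\otimes\rho_A)S_{AA'}]=\Tr\rho_A^2$. Therefore $E_L(\rho)=1-\Tr[(\rho\otimes\rho)S_{AA'}]$.

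Next, we substitute the channel output. Set $\rho=\Phi(\ketbra{\psi}{\psi})$ with $\ket\psi=(U_A\otimes U_B)\ket{\phi_{\vec\lambda}}$. Expanding both copies in the Kraus form gives
\begin{equation*}
\rho\otimes\rho=\sum_{\alpha,\beta}(K_\alpha\otimes K_\beta)\,(\ketbra{\psi}{\psi})^{\otimes 2}\,(K_\alpha^\dagger\otimes K_\beta^\dagger).
\end{equation*}
This expression is linear in $(\ketbra\psi\psi)^{\otimes2}$. Integrating over $dU_A\,dU_B$ and exchanging the integral with the finite sum and the trace replaces $(\ketbra\psi\psi)^{\otimes2}$ by $\Omega(\mu)$. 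This yields Eq.~\eqref{eq:EL-out-Omega}. The constant term $1$ survives the averaging because the Haar measure is normalized.

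For the variation, we note that the input is pure with reduced spectrum $\vec\lambda$. Hence $E_{L,\rm in}=1-\sum_i\lambda_i^2=1-\mu$, independently of $U_A,U_B$. Subtracting this from Eq.~\eqref{eq:EL-out-Omega} gives $\Delta e_L=\mu-\sum_{\alpha,\beta}\Tr[\cdots S_{AA'}]$, which is Eq.~\eqref{eq:delta-eL-Omega}.

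The only nontrivial ingredient is the closed form of $\Omega(\mu)$ itself, which we take from Eq.~\eqref{eq:Omega-mu} as quoted. To confirm it, we would use local-unitary invariance: $\Omega(\mu)$ commutes with $U_A^{\otimes2}\otimes U_B^{\otimes2}$, so by Schur--Weyl duality it lies in the span of $\{\mathbbm 1, S_{AA'}, S_{BB'}, S_{AA'}S_{BB'}\}$. The four coefficients are fixed by the four traces of $\Omega(\mu)$ against these operators. These traces equal $1$, $\mu$, $\mu$ and $1$ respectively, since the state is pure and both of its marginals have purity $\mu$. Solving the resulting $4\times4$ linear system reproduces the displayed coefficients. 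This last step is the main bookkeeping obstacle, though it is routine.
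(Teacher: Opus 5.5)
Your proof of the proposition itself is correct and follows the paper's argument. You use the swap identity $\Tr\rho_A^2=\Tr[(\rho_{AB}\otimes\rho_{AB})S_{AA'}]$, expand $\rho\otimes\rho$ in Kraus operators, and use linearity of the Haar average to replace $(\ketbra{\psi}{\psi})^{\otimes 2}$ by $\Omega(\mu)$. Subtracting $E_{L,\rm in}=1-\mu$ then gives Eq.~\eqref{eq:delta-eL-Omega}.

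You differ from the paper only in how you confirm the second moment $\Omega(\mu)$. The paper's appendix computes it by explicit Weingarten calculus on each local factor. It then contracts the resulting $\delta$'s against $\sqrt{\lambda_i\lambda_j\lambda_k\lambda_l}$.

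You instead place $\Omega(\mu)$ in the commutant $\mathrm{span}\{\mathbbm 1,S_{AA'},S_{BB'},S_{AA'}S_{BB'}\}$. You then fix its coefficients from the four local-unitary-invariant traces $1,\mu,\mu,1$. This route is valid:
\begin{itemize}
\item The Gram matrix of the four operators is $G\otimes G$ with $G=\begin{pmatrix} d^2 & d\\ d & d^2\end{pmatrix}$. It is invertible for $d\ge 2$, so the linear system has a unique solution.
\item The paper's displayed coefficients do satisfy $\Tr\Omega(\mu)=1$ and $\Tr[\Omega(\mu)S_{AA'}]=\mu$. By symmetry they satisfy the other two conditions as well.
\end{itemize}

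Your route is shorter and less prone to index errors. It also shows immediately why the orbit average depends on $\vec\lambda$ only through $\mu$: the only invariants entering are the global purity and the marginal purities. The Weingarten computation produces the same coefficients constructively, without an ansatz. In effect it inverts the same Gram matrix, only index by index.
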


The proof follows from the swap identity
\[
\Tr(\rho_A^2)
=
\Tr\left[
(\rho_{AB}\otimes\rho_{AB})S_{AA'}
\right]
\]
and the second-moment formula \eqref{eq:Omega-mu}. Details are given in
Appendix~\ref{app:delta-eL:}.

We now derive bounds for the concurrence variation. For an output state
\[
\rho_{AB}
=
\Phi(\ketbra{\psi_\theta(U_A,U_B)}{\psi_\theta(U_A,U_B)}),
\]
define
\begin{equation}
	\delta_P(\rho)
	:=
	\Tr(\rho_{AB}^2)-\Tr(\rho_A^2),
	\quad
	\rho_A=\Tr_B\rho_{AB}.
\end{equation}
As in the product-input case, the lower bound may be strengthened by taking the
maximum of the two versions obtained from tracing over \(A\) or \(B\).
Orbit average of \(\delta_P(\rho)\) reads
\begin{align}
	\overline{\delta_P}(\Phi,\theta)
	:=&
	\int dU_A dU_B\,
	\nonumber\\ &\times
	\delta_P\!\left(
	\Phi(\ketbra{\psi_\theta(U_A,U_B)}{\psi_\theta(U_A,U_B)})
	\right).
	\label{eq:def-average-deltaP-theta}
\end{align}

\begin{proposition}
	\label{prop:average-deltaP-orbit}
	Let \(\Phi\) be a CPTP map with Kraus operators \(\{K_\alpha\}_\alpha\).
	Then
	\begin{align}
		\overline{\delta_P}(\Phi,\vec\lambda)
		=
		\sum_{\alpha,\beta}
		\Tr\!\big[
		(K_\alpha\otimes K_\beta)
		\Omega(\mu)
		(K_\alpha^\dagger\otimes K_\beta^\dagger)
		\nonumber\\ \times
		(S_{BB'}-\mathbbm 1)S_{AA'}
		\big],
		\label{eq:av-delta-p}
	\end{align}
	where the average is taken over the fixed-Schmidt-spectrum local-unitary orbit
	of \(\ket{\phi_{\vec\lambda}}\).
\end{proposition}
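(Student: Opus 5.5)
The proof will follow the same route as Proposition~\ref{prop:delta-P-tensor}, with the product-input second moment $\Omega_\otimes$ replaced by the orbit second moment $\Omega(\mu)$ of Eq.~\eqref{eq:Omega-mu}.

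First, for a fixed output state $\rho_{AB}$ on $\mathbb C^d\otimes\mathbb C^d$, we use the swap trick to write the global and reduced purities as
\begin{equation*}
\Tr(\rho_{AB}^2)=\Tr\!\left[(\rho_{AB}\otimes\rho_{AB})S_{AA'}S_{BB'}\right],\qquad
\Tr(\rho_A^2)=\Tr\!\left[(\rho_{AB}\otimes\rho_{AB})S_{AA'}\right].
\end{equation*}
Since $S_{AA'}$ and $S_{BB'}$ act on different tensor factors, they commute. Hence $S_{AA'}S_{BB'}-S_{AA'}=(S_{BB'}-\mathbbm 1)S_{AA'}$, and therefore
\begin{equation*}
\delta_P(\rho_{AB})=\Tr\!\left[(\rho_{AB}\otimes\rho_{AB})(S_{BB'}-\mathbbm 1)S_{AA'}\right].
\end{equation*}

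Next, we substitute $\rho_{AB}=\Phi(\ketbra{\psi}{\psi})$ with $\ket{\psi}=(U_A\otimes U_B)\ket{\phi_{\vec\lambda}}$. Expanding both copies in Kraus operators gives
\begin{equation*}
\rho_{AB}\otimes\rho_{AB}=\sum_{\alpha,\beta}(K_\alpha\otimes K_\beta)\,(\ketbra{\psi}{\psi})^{\otimes 2}\,(K_\alpha^\dagger\otimes K_\beta^\dagger).
\end{equation*}
This expression is linear in $(\ketbra{\psi}{\psi})^{\otimes 2}$. The orbit average in Eq.~\eqref{eq:def-average-deltaP-theta} is a finite-dimensional integral, so it commutes with the finite Kraus sum, the conjugation by $K_\alpha\otimes K_\beta$, and the trace.

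Finally, integrating $(\ketbra{\psi}{\psi})^{\otimes 2}$ over $dU_A\,dU_B$ yields, by definition, $\Omega(\mu)$ as recalled in Eq.~\eqref{eq:Omega-mu}. This gives Eq.~\eqref{eq:av-delta-p} directly.

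The only point requiring care is the identification of the averaged two-copy state with $\Omega(\mu)$. The closed form in Eq.~\eqref{eq:Omega-mu}, which depends on $\vec\lambda$ only through $\mu$, is a separate computation from Appendix~\ref{app:delta-eL:} that we take as given. The present statement needs only the defining integral of $\Omega(\mu)$, so no Weingarten calculus enters this proof.
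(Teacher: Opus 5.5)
Your proposal is correct and is essentially the paper's proof. The paper proves this statement by repeating the two-copy swap-trick argument of Proposition~\ref{prop:delta-P-tensor}, with $\Omega_{\otimes}$ replaced by the orbit second moment $\Omega(\mu)$, exactly as you do; your remark that only the defining integral of $\Omega(\mu)$, not its closed form, is needed is also accurate.
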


The proof is the same two-copy argument as in
Proposition~\ref{prop:delta-P-tensor}, with \(\Omega_{\otimes}\) replaced by
the orbit second moment \(\Omega(\mu)\).

Using inequality \eqref{eq:state-tangle-bounds} and averaging over the orbit, we obtain
\begin{equation}
	2\max\{0,\overline{\delta_P}(\Phi,\theta)\}
	\le
	\overline{C^2}_{\rm out}(\Phi,\theta)
	\le
	2\overline{E_L}_{\rm out}(\Phi,\theta).
	\label{eq:tau-out-bounds-theta}
\end{equation}
Since \(0\le C\le1\), one has \(C^2\le C\), and Jensen's inequality gives
\[
\overline{C}_{\rm out}(\Phi,\theta)^2
\le
\overline{C^2}_{\rm out}(\Phi,\theta).
\]
Combining these observations yields the direct concurrence bounds
\begin{equation}
	2\max\{0,\overline{\delta_P}(\Phi,\theta)\}
	\le
	\overline{C}_{\rm out}(\Phi,\theta)
	\le
	\sqrt{
		2\overline{E_L}_{\rm out}(\Phi,\theta)
	}.
	\label{eq:Cout-bounds-theta}
\end{equation}
Equivalently, for the average concurrence variation is bounded by 
\begin{widetext}
	\begin{equation}
		2\max\{0,\overline{\delta_P}(\Phi,\theta)\}
		-
		\sin 2\theta
		\le
		\Delta e_C(\Phi,\theta)
		\le
		\sqrt{
			2\left[
			\Delta e_L(\Phi,\theta)
			+
			\frac12\sin^2 2\theta
			\right]
		}
		-
		\sin 2\theta .
		\label{eq:Delta-eC-bounds-theta}
	\end{equation}
\end{widetext}
Equation~\eqref{eq:Delta-eC-bounds-theta} is the main result of this section.
It gives analytic bounds on the average change of concurrence over a
fixed Schmidt orbit for two-qubit channel \(\Phi\), computable by Kraus operators. 

For negativity, the present purity-difference method gives no analogous
nontrivial lower bound. However, using \(2\mathcal N(\rho)\le C(\rho)\), one
gets the upper estimate
\begin{equation}
	\Delta e_{\mathcal N}(\Phi,\theta)
	\le
	\frac12
	\sqrt{
		2\left[
		\Delta e_L(\Phi,\theta)
		+
		\frac12\sin^2 2\theta
		\right]
	}
	-
	\frac12\sin 2\theta .
	\label{eq:Delta-eN-upper-theta}
\end{equation}

In Fig. (\ref{fig:2d_damping_bounds}) we present a comparison of $\Delta e_C$ with its bounds, in the case of the damping channel. Similarly, Fig. (\ref{fig:2d_cz_damping_bounds}) shows the case of the CZ channel under phase damping noise.
In both cases the lower bound gives a better estimation of $\Delta e_C$ than the upper bound, which mostly captures the overall noise.

\section{Conclusions}
In this work we investigated two complementary aspects of nonunitary bipartite quantum channels: (a) how much entanglement they generate from product inputs, and (b) how well they preserve input states from different fixed-Schmidt-spectrum
local-unitary orbits. For arbitrary bipartite channels we derived closed Kraus-operator formulas for the average input--output fidelity $\Favg$ and for its restriction to product inputs $\Fprod$. We then showed that the fidelity averaged over any fixed-Schmidt-spectrum local-unitary orbit is completely determined by these two quantities. 
Equivalently, after fixing \(\Fprod\), the purity dependence of the
orbit-averaged fidelity profile is controlled by the single diagnostic
parameter \(\chi_F=\Favg-\Fprod\). Its sign distinguishes channels that favor
lower-purity input orbits from those that favor product inputs. For two qubits,
this is equivalently a distinction between more entangled and more product-like
pure inputs.

We also introduced concurrence- and negativity-based entangling powers for two-qubit channels and compared them with the previously proposed linear-entropy-based quantity. In contrast to the latter, the concurrence- and negativity-based definitions vanish for all separable channels, are convex under channel mixing, and are monotone under local postprocessing. They therefore capture genuine entanglement generation from product inputs, rather than local impurity of mixed outputs, which is the case for quantity introduced in Ref.~\cite{Kong2015}. This important difference is already visible in simple examples considered. The linear-entropy quantity can be nonzero even if no entanglement is generated from product states, while the fidelity-bias parameter 
\(\chi_F\) can still reveal a nontrivial preference for lower-purity or
product-like input orbits.
We note that some classes of local product noise satisfy $\chi_F\le 0$ in the physically relevant regime, whereas correlated nonlocal noise can yield $\chi_F>0$, despite having zero proper entangling power. 

Finally, we derived analytical bounds on concurrence-based entangling power in terms of averaged purity differences and extended the framework to average entanglement variation over general Schmidt orbits. 
In the examples considered, the lower bound gives a useful certificate and a finer estimate of concurrence-based entangling power and concurrence variation than the corresponding upper bound. 
This provides a natural way to treat channels acting on initially entangled inputs, and shows that the same distinction between fidelity diagnostics and genuine entanglement measures persists beyond the product-input setting. Taken together, our results suggest that a faithful characterization of noisy bipartite operations should combine standard fidelity-based figures of merit with proper entanglement-based diagnostics, since neither class of quantities alone captures the full structure of bipartite quantum noise.

\textit{Acknowledgements--} 
It is a pleasure to thank Rafa{\l} Bistro{\'n} and Jakub Czartowski for useful comments that helped improve this work.
We acknowledge funding by the European Union under ERC
Advanced Grant TAtypic, Project No. 101142236.
GT would like to thank the Jagiellonian University for its warm hospitality and the University of Salerno for managing the period abroad. This research stay was supported by the PNRR (Mission 4: Education and Research).

\onecolumngrid
\appendix
\section{Derivation of fidelity formulae}\label{app:fidelity-derivations}

\subsection{Proof of Proposition~\ref{prop:Favg-Fprod}}
\label{app:proof-prop-Favg-Fprod-general}

We prove separately the formulae for \(\Favg\) and \(\Fprod\).

\paragraph*{1. Average input--output fidelity.}
For a fixed Kraus operator \(K\), define
\begin{equation*}
f(K):=\int d\psi\; \big|\bra{\psi}K\ket{\psi}\big|^2,
\qquad
\Favg(\Phi)=\sum_\alpha f(K_\alpha).
\end{equation*}
Choose an orthonormal basis \(\{\ket{\mu}\}_{\mu=1}^{D}\) of \(\mathcal H_A\otimes\mathcal H_B\), and write
\[
K_{\mu\nu}:=\bra{\mu}K\ket{\nu}.
\]
A Haar-random pure state on \(\mathbb C^{D}\) can be written as
\[
\ket{\psi}=W\ket{0},
\qquad
W\in U(D),
\]
with \(\ket{0}\) a fixed basis vector. Then
\begin{align}
f(K)
=
\int dW\;
\big|\bra{0}W^\dagger K W\ket{0}\big|^2=
\sum_{\mu,\nu,\kappa,\lambda=1}^{D}
K_{\mu\nu}K^*_{\kappa\lambda}
\int dW\;
W^*_{\mu 0}W_{\nu 0}W_{\kappa 0}W^*_{\lambda 0}.
\label{eq:app-global-fK-start-general}
\end{align}
We now use the second-order Weingarten formula \cite{CollinsSniady2006} on \(U(D)\), specialized to the case where all column indices are equal:
\begin{equation}
\int dW\;
W^*_{\mu 0}W_{\nu 0}W_{\kappa 0}W^*_{\lambda 0}
=
\frac{\delta_{\mu\nu}\delta_{\kappa\lambda}+\delta_{\mu\kappa}\delta_{\nu\lambda}}{D(D+1)}.
\label{eq:app-weingarten-one-column-D-general}
\end{equation}
Substituting \eqref{eq:app-weingarten-one-column-D-general} into
\eqref{eq:app-global-fK-start-general} gives
\begin{align}
f(K)
=
\frac{1}{D(D+1)}
\sum_{\mu,\nu,\kappa,\lambda}
K_{\mu\nu}K^*_{\kappa\lambda}
\left(
\delta_{\mu\nu}\delta_{\kappa\lambda}
+
\delta_{\mu\kappa}\delta_{\nu\lambda}
\right)
=
\frac{|\Tr K|^2+\Tr(KK^\dagger)}{D(D+1)}.
\label{eq:app-global-fK-final-general}
\end{align}
Therefore, we arrive at
\begin{align}
\Favg(\Phi)=
\sum_\alpha f(K_\alpha)=
\frac{1}{D(D+1)}
\left(
\sum_\alpha |\Tr K_\alpha|^2
+
\sum_\alpha \Tr(K_\alpha K_\alpha^\dagger)
\right)=
\frac{D+T}{D(D+1)},
\quad
T:=\sum_\alpha |\Tr K_\alpha|^2.
\end{align}

\paragraph*{2. Product-state average fidelity.}
For a fixed Kraus operator \(K\), define
\begin{equation}
f_\otimes(K)
:=
\int d\psi_A\,d\psi_B\;
\big|\bra{\psi_A\psi_B}K\ket{\psi_A\psi_B}\big|^2,
\qquad
\Fprod(\Phi)=\sum_\alpha f_\otimes(K_\alpha).
\end{equation}
Choose orthonormal bases \(\{\ket{a}\}_{a=1}^{d_A}\) of \(\mathcal H_A\) and
\(\{\ket{b}\}_{b=1}^{d_B}\) of \(\mathcal H_B\), and write
\[
K_{ab,cd}:=\bra{ab}K\ket{cd}.
\]
Let us use the following notation,
\[
\ket{\psi_A}=U\ket{0_A},
\qquad
\ket{\psi_B}=V\ket{0_B},
\qquad
U\in U(d_A),\quad V\in U(d_B).
\]
Then
\begin{equation*}
\bra{\psi_A\psi_B}K\ket{\psi_A\psi_B}
=
\sum_{a,c=1}^{d_A}\sum_{b,d=1}^{d_B}
U^*_{a0}V^*_{b0}\,K_{ab,cd}\,U_{c0}V_{d0}.
\end{equation*}
Hence we obtain,
\begin{align}
f_\otimes(K)
&=
\sum_{a,c,e,g=1}^{d_A}\sum_{b,d,f,h=1}^{d_B}
K_{ab,cd}K^*_{ef,gh}
\int dU\; U^*_{a0}U_{c0}U_{e0}U^*_{g0}
\int dV\; V^*_{b0}V_{d0}V_{f0}V^*_{h0}.
\label{eq:app-fprod-start-general}
\end{align}
Using Eq.~(\ref{eq:app-weingarten-one-column-D-general}) with appropriate dimensions, and performing basic summation over indices, we get an expression,
\begin{equation}
f_\otimes(K)
=
\frac{
|\Tr K|^2
+
\Tr(K^A K^{A\dagger})
+
\Tr(K^B K^{B\dagger})
+
\Tr(KK^\dagger)
}
{d_A d_B (d_A+1)(d_B+1)}.
\label{eq:app-fprod-single-final-general}
\end{equation}
Summing over Kraus operators gives
\begin{align}
\Fprod(\Phi)
=
\sum_\alpha f_\otimes(K_\alpha)
&=
\frac{1}{d_A d_B (d_A+1)(d_B+1)}
\Bigg[
\sum_\alpha |\Tr K_\alpha|^2
+
\sum_\alpha \Tr(K_\alpha^A K_\alpha^{A\dagger})
\nonumber\\
&
+
\sum_\alpha \Tr(K_\alpha^B K_\alpha^{B\dagger})
+
\sum_\alpha \Tr(K_\alpha K_\alpha^\dagger)
\Bigg]\\&=\frac{D+T+M_A+M_B}{D(d_A+1)(d_B+1)},
\end{align}
where
\begin{equation*}
M_A:=\sum_\alpha \Tr(K_\alpha^A K_\alpha^{A\dagger}),
\qquad
M_B:=\sum_\alpha \Tr(K_\alpha^B K_\alpha^{B\dagger}).
\end{equation*}
This finishes the proof of Proposition~\ref{prop:Favg-Fprod}.

\subsection{Fidelity averaged over Schmidt orbit.}
\label{app:fidelity-schmidt-qudit}

In this Appendix we derive Proposition~\ref{prop:Flambda-Kraus} by explicit second-moment Weingarten calculus \cite{CollinsSniady2006}.
Let us remind basic notation used,
\begin{equation}
\ket{\phi_{\vec\lambda}}
=
\sum_{i=1}^d \sqrt{\lambda_i}\,\ket{ii},
\qquad
\lambda_i\ge 0,
\qquad
\sum_{i=1}^d \lambda_i=1,\qquad\mu=\mu(\vec\lambda)=\sum_{i=1}^d \lambda_i^2.
\end{equation}
For a single Kraus operator \(K\), consider
\begin{align}
f_{\vec\lambda}(K)
:=
\int dU\,dV\;
\left|
\bra{\phi_{\vec\lambda}}
(U^\dagger\otimes V^\dagger)\,K\,(U\otimes V)
\ket{\phi_{\vec\lambda}}
\right|^2,
\label{eq:app-def-fK-lambda}
\end{align}
and denote
\begin{equation}
\Favg_{\vec\lambda}(\Phi)=\sum_\alpha f_{\vec\lambda}(K_\alpha).
\end{equation}
Writing
\begin{equation}
\bra{ii}(U^\dagger\otimes V^\dagger)K(U\otimes V)\ket{jj}
=
\sum_{a,b,c,d}
U^*_{ai}V^*_{bi}\,K_{ab,cd}\,U_{cj}V_{dj},
\end{equation}
we obtain
\begin{align}
f_{\vec\lambda}(K)
&=
\sum_{i,j,k,l}
\sqrt{\lambda_i\lambda_j\lambda_k\lambda_l}
\sum_{\substack{a,b,c,d\\ e,f,g,h}}
K_{ab,cd}\,K^*_{gh,ef}
\int dU\; U^*_{ai}U_{cj}U^*_{ek}U_{gl}
\int dV\; V^*_{bi}V_{dj}V^*_{fk}V_{hl}.
\label{eq:app-fK-lambda-expanded}
\end{align}
For \(U\in U(d)\), the second-moment Weingarten formula \cite{CollinsSniady2006} reads
\begin{align}
\int dU\; U^*_{ai}U_{cj}U^*_{ek}U_{gl}
=
\frac{
\delta_{ac}\delta_{eg}\delta_{ij}\delta_{kl}
+
\delta_{ag}\delta_{ec}\delta_{il}\delta_{kj}
}{d^2-1}
-
\frac{
\delta_{ac}\delta_{eg}\delta_{il}\delta_{kj}
+
\delta_{ag}\delta_{ec}\delta_{ij}\delta_{kl}
}{d(d^2-1)}.
\label{eq:app-U-int-expanded}
\end{align}
The next step is direct summation. Using the following identities,
\begin{align}
\sum_{i,j,k,l}
\sqrt{\lambda_i\lambda_j\lambda_k\lambda_l}\,
\delta_{ij}\delta_{kl}=\left(\sum_i \lambda_i\right)^2=1,
\qquad
\sum_{i,j,k,l}
\sqrt{\lambda_i\lambda_j\lambda_k\lambda_l}\,
\delta_{ij}\delta_{kl}\delta_{il}\delta_{kj}=\sum_i \lambda_i^2=\mu,
\end{align}

\begin{align}
&\sum_{\substack{a,b,c,d\\ e,f,g,h}}
K_{ab,cd}K^*_{gh,ef}\,
\delta_{ac}\delta_{eg}\delta_{bd}\delta_{fh}=
\sum_{a,b,e,f}
K_{ab,ab}K^*_{ef,ef}
=
|\Tr K|^2,\\
&\sum_{\substack{a,b,c,d\\ e,f,g,h}}
K_{ab,cd}K^*_{gh,ef}\,
\delta_{ag}\delta_{ec}\delta_{bh}\delta_{fd}=\sum_{a,b,e,f}
K_{ab,ef}K^*_{ab,ef}=\Tr(KK^\dagger),
\\&\sum_{\substack{a,b,c,d\\ e,f,g,h}}
K_{ab,cd}K^*_{gh,ef}\,
\delta_{ag}\delta_{ec}\delta_{bd}\delta_{fh}=
\sum_{a,e,b,f}
K_{ab,eb}K^*_{af,ef}
=
\Tr(K^A K^{A\dagger}),\\
&\sum_{\substack{a,b,c,d\\ e,f,g,h}}
K_{ab,cd}K^*_{gh,ef}\,
\delta_{ac}\delta_{eg}\delta_{bh}\delta_{fd}=
\sum_{a,e,b,f}
K_{ab,af}K^*_{eb,ef}
=
\Tr(K^B K^{B\dagger}),
\end{align}
where we use partial trace of bi-partite Kraus operators,
\begin{equation}
K^A:=\Tr_B K,
\quad\text{and}\quad
K^B:=\Tr_A K.
\end{equation}
Finally we obtain
\begin{align}
f_{\vec\lambda}(K)
=
\frac{1}{(d^2-1)^2}
\Bigg[
&\left(1+\frac{1}{d^2}-\frac{2\mu}{d}\right)
\bigl(|\Tr K|^2+\Tr(KK^\dagger)\bigr)
\nonumber\\
&+
\left(\left(1+\frac{1}{d^2}\right)\mu-\frac{2}{d}\right)
\bigl(\Tr(K^A K^{A\dagger})+\Tr(K^B K^{B\dagger})\bigr)
\Bigg].
\label{eq:app-fK-final}
\end{align}

Now let \(\Phi(\rho)=\sum_\alpha K_\alpha \rho K_\alpha^\dagger\).
Summing \eqref{eq:app-fK-final} over \(\alpha\) yields,
\begin{align}
\Favg_{\vec\lambda}(\Phi)
=
\frac{1}{(d^2-1)^2}
\Bigg[
&\left(1+\frac{1}{d^2}-\frac{2\mu}{d}\right)
\sum_\alpha \bigl(|\Tr K_\alpha|^2+\Tr(K_\alpha K_\alpha^\dagger)\bigr)
\nonumber\\
&+
\left(\left(1+\frac{1}{d^2}\right)\mu-\frac{2}{d}\right)
\sum_\alpha \bigl(\Tr(K_\alpha^A K_\alpha^{A\dagger})+\Tr(K_\alpha^B K_\alpha^{B\dagger})\bigr)
\Bigg].
\end{align}
Making use of
\begin{equation}
T:=\sum_\alpha |\Tr K_\alpha|^2,
\qquad
M_A:=\sum_\alpha \Tr(K_\alpha^A K_\alpha^{A\dagger}),
\qquad
M_B:=\sum_\alpha \Tr(K_\alpha^B K_\alpha^{B\dagger}),
\end{equation}
we arrive at
\begin{equation}
\Favg_{\vec\lambda}(\Phi)
=
\frac{1}{(d^2-1)^2}
\Bigg[
(d^2+T)\left(1+\frac{1}{d^2}-\frac{2\mu}{d}\right)
+
(M_A+M_B)\left(\left(1+\frac{1}{d^2}\right)\mu-\frac{2}{d}\right)
\Bigg],
\end{equation}
which gives desired Eq.~\eqref{eq:Flambda-Kraus}.

\subsection{Derivation of Proposition~\ref{prop:product-channel-chi}}

For the product channel \eqref{eq:product-channel}, a Kraus representation is
\begin{equation}
K_{ij}=A_i\otimes B_j.
\end{equation}
The global invariant $T$ defined in Proposition~\ref{prop:Favg-Fprod} factorizes:
\begin{align}
T
&=
\sum_{i,j}\left|\Tr(A_i\otimes B_j)\right|^2
=
\sum_{i,j}\left|\Tr A_i\right|^2\left|\Tr B_j\right|^2
=
x_Ax_B.
\label{eq:T-factorized}
\end{align}
Likewise, using $\Tr_B(A_i\otimes B_j)=(\Tr B_j)A_i$ and $\Tr_A(A_i\otimes B_j)=(\Tr A_i)B_j$, one finds
\begin{align}
M_A
&=
\sum_{i,j}\Tr\!\left[(\Tr_B K_{ij})(\Tr_B K_{ij})^\dagger\right]\nonumber
=
\sum_{i,j}|\Tr B_j|^2\,\Tr(A_iA_i^\dagger)
=
x_B\sum_i \Tr(A_iA_i^\dagger)
\\&=
x_B\,\Tr \Bigl(\sum_i A_i^\dagger A_i\Bigr)
=
2x_B,
\quad M_B
=
2x_A.
\label{eq:MB-product}
\end{align}
Substituting \eqref{eq:T-factorized}--\eqref{eq:MB-product} into Proposition~\ref{prop:Favg-Fprod} gives
\begin{align}
\Favg(\Phi)&=\frac{4+x_Ax_B}{20},
\nonumber\\
\Fprod(\Phi)&=\frac{4+x_Ax_B+2x_A+2x_B}{36},
\end{align}
and hence we arrive at Eq.~\eqref{eq:chi-product-x}
\begin{equation}
\chi_F(\Phi)
=
\frac{8+2x_Ax_B-5x_A-5x_B}{90}.
\end{equation}
To obtain  Eq.~\eqref{eq:chi-product-f} we use $x_A=6f_A-2$ and $x_B=6f_B-2$. Finally, let
\begin{equation}
u:=1-f_A,
\qquad
v:=1-f_B.
\end{equation}
Under the assumption \eqref{eq:condition-fA-fB} one has $u,v\in[0,1/2]$, and Eq.~\eqref{eq:chi-product-f} becomes
\begin{equation}
\chi_F(\Phi)=\frac{4uv-u-v}{5}.
\end{equation}
Since $u,v\in[0,1/2]$, one has
\begin{equation}
4uv\le u+v,
\end{equation}
and therefore $\chi_F(\Phi)\le 0$.

\section{Entangling power for controlled gates}
\label{app:1}

For a Haar-random single-qubit state $\ket{\psi}=a\ket0+b\ket1$ one has
$|a|^2\sim\mathrm{Unif}[0,1]$ and therefore
\begin{align}
&\mathbb{E}\big[|a|^2|b|^2\big]
=\int_0^1 x(1-x)\,dx=\frac{1}{6},
\label{eq:avg_a2b2}\\
&\mathbb{E}\big[|a||b|\big]
=\int_0^1 \sqrt{x(1-x)}\,dx=\mathrm{B}\!\left(\frac{3}{2},\frac{3}{2}\right)
=\frac{\pi}{8},
\label{eq:avg_ab}
\end{align}
where $\mathrm{B}$ denotes the Beta function of Euler.

Let $\mathrm{CP}(\varphi)=\mathrm{diag}(1,1,1,e^{i\varphi})$ and take a product input
$\ket{\psi_A}=a\ket0+b\ket1$, $\ket{\psi_B}=c\ket0+d\ket1$.
Then
\begin{equation}
\ket{\Psi_{\rm out}}=ac\ket{00}+ad\ket{01}+bc\ket{10}+e^{i\varphi}bd\ket{11}.
\end{equation}
Writing the $2\times 2$ coefficient matrix
\[
A=\begin{pmatrix}
ac & ad\\
bc & e^{i\varphi}bd
\end{pmatrix},
\]
the concurrence of a two-qubit pure state reads $C=2|\det A|$, hence
\begin{equation}
C=2|abcd|\ |e^{i\varphi}-1|=4|abcd|\ |\sin(\varphi/2)|.
\label{eq:CP_C_pointwise_app}
\end{equation}
Because the input Haar measure factorizes between $A$ and $B$,
Eq.~\eqref{eq:CP_C_pointwise_app} gives
\begin{align}
e_C(\Ad_{\mathrm{CP}(\varphi)})
&=\int d\psi_A d\psi_B\ C
=4|\sin(\varphi/2)|\Big(\mathbb{E}[|a||b|]\Big)\Big(\mathbb{E}[|c||d|]\Big)
\nonumber\\
&=4|\sin(\varphi/2)|\left(\frac{\pi}{8}\right)^2
=\frac{\pi^2}{16}\,|\sin(\varphi/2)|.
\label{eq:eC_CP_app}
\end{align}
Since output states are pure, negativity $\mathcal N=C/2$ pointwise, hence
\begin{equation}
e_{\mathcal N}(\Ad_{\mathrm{CP}(\varphi)})=\frac{1}{2}e_C(\Ad_{\mathrm{CP}(\varphi)})
=\frac{\pi^2}{32}\,|\sin(\varphi/2)|.
\label{eq:eN_CP_app}
\end{equation}
Using $E_L=C^2/2$ for pure outputs and \eqref{eq:CP_C_pointwise_app},
\begin{align}
e_L(\Ad_{\mathrm{CP}(\varphi)})
&=\int d\psi_A d\psi_B\ \frac{1}{2}C^2
=\frac{1}{2}\int d\psi_A d\psi_B\ 16|abcd|^2\sin^2(\varphi/2)
\nonumber\\
&=8\sin^2(\varphi/2)\Big(\mathbb{E}[|a|^2|b|^2]\Big)^2
=8\sin^2(\varphi/2)\left(\frac{1}{6}\right)^2
=\frac{2}{9}\sin^2(\varphi/2).
\label{eq:eL_CP_app}
\end{align}
CZ is $\mathrm{CP}(\pi)$, so $|\sin(\pi/2)|=1$. Equations
\eqref{eq:eC_CP_app}--\eqref{eq:eN_CP_app} give
$e_C(\Ad_{\mathrm{CZ}})=\pi^2/16$ and $e_{\mathcal N}(\Ad_{\mathrm{CZ}})=\pi^2/32$.

\section{Concurrence based entangling power for controlled Z gate under correlated dephasing}\label{app:concurrence-rescaling}
For a product input \(\ket{\psi_A}\otimes\ket{\psi_B}\) with
\[
\ket{\psi_A}=a\ket{0}+b\ket{1},
\qquad
\ket{\psi_B}=c\ket{0}+d\ket{1},
\]
the ideal CZ output reads
\begin{equation}
\ket{\Psi_+}
=
\Ad_{\mathrm{CZ}}(\ket{\psi_A\psi_B})
=
ac\ket{00}+ad\ket{01}+bc\ket{10}-bd\ket{11}.
\label{eq:Psi-plus-CZ}
\end{equation}
Let
\begin{equation}
\ket{\Psi_-}:=(Z\otimes Z)\ket{\Psi_+}
=
ac\ket{00}-ad\ket{01}-bc\ket{10}-bd\ket{11}.
\label{eq:Psi-minus-CZ}
\end{equation}
Since the correlated dephasing channel can be written as
\begin{equation}
\Phi_u(\rho)=\frac{1+u}{2}\,\rho+\frac{1-u}{2}\,(Z\otimes Z)\rho(Z\otimes Z),
\label{eq:Phi-u-mixture-app}
\end{equation}
the noisy CZ output is
\begin{equation}
\rho_u
=
\Phi_u^{\mathrm{CZ}}(\ketbra{\psi_A\psi_B}{\psi_A\psi_B})
=
p\ketbra{\Psi_+}{\Psi_+}+(1-p)\ketbra{\Psi_-}{\Psi_-},
\qquad
p=\frac{1+u}{2}.
\label{eq:rho-u-decomp}
\end{equation}
Now we use the Wootters decomposition formula \cite{Wootters1998}. For a decomposition
\(\rho=\sum_i p_i \ketbra{x_i}{x_i}\), define
\begin{equation}
\tau_{ij}:=\sqrt{p_i p_j}\,\braket{x_i}{\widetilde{x_j}},
\qquad
\ket{\widetilde{x}}:=(\sigma_y\otimes\sigma_y)\ket{x^*}.
\end{equation}
The concurrence is the difference of the singular values of \(\tau\)
for a rank-two decomposition \cite{Wootters1998}.
For the states \(\ket{\Psi_\pm}\) above one finds
\begin{align}
\braket{\Psi_+}{\widetilde{\Psi_+}}
&=
-4\,a b c d,
\\
\braket{\Psi_-}{\widetilde{\Psi_-}}
&=
+4\,a b c d,
\\
\braket{\Psi_+}{\widetilde{\Psi_-}}
&=
\braket{\Psi_-}{\widetilde{\Psi_+}}
=
0,
\end{align}
Therefore, with
\[
C_0:=C(\ket{\Psi_+})=C(\ket{\Psi_-})=4|abcd|,
\]
the matrix \(\tau\) is diagonal:
\begin{equation}
\tau=
\begin{pmatrix}
-p\,C_0 & 0\\
0 & (1-p)\,C_0
\end{pmatrix},
\end{equation}
and Wootters' formula gives
\begin{equation}
C(\rho_u)=\bigl|p-(1-p)\bigr|\,C_0=u\,C_0.
\end{equation}
Hence correlated dephasing rescales the output concurrence pointwise by a coefficient \(u\):
\begin{equation}
C\!\left(\Phi_u^{\mathrm{CZ}}(\ketbra{\psi_A\psi_B}{\psi_A\psi_B})\right)
=
u\,
C\!\left(\Ad_{\mathrm{CZ}}(\ketbra{\psi_A\psi_B}{\psi_A\psi_B})\right).
\label{eq:concurrence-rescaling-u}
\end{equation}
Averaging over product input states yields
\[
e_C(\Phi_u^{\mathrm{CZ}})=u\,e_C(\Ad_{\mathrm{CZ}}).
\]

\section{Proof of Proposition \ref{prop:delta-eL}\label{app:delta-eL:}}

Once proved Eq. (\ref{eq:Omega-mu}), the proof of the Proposition \ref{prop:delta-eL} follows by using standard techniques for Haar integration. 
Consider the generic state on the Schmidt orbit
\begin{equation}
    \ket{\psi_{AB}}=(V_A\otimes V_B)\sum_{i=0}^{d-1}\sqrt{\lambda_i}\ket{i}\ket{i}.
\end{equation}
The quantity to evaluate is 
\begin{equation}
    \Omega(\mu)= \int dV_A dV_B \: \ketbra{\psi_{AB}}{\psi_{AB}}^{\otimes 2}=\sum_{ijkl}\sqrt{\lambda_i \lambda_j \lambda_k \lambda_l} 
    \int dV_A  dV_B \left[(V_A \ketbra{i}{j}V_A^\dagger ) \otimes ( V_B \ketbra{i}{j} V_B^\dagger)\right]^{ \otimes 2}
\end{equation}
which can also be rewritten by factorizing the $AA'$ contribution from the $BB'$ contribution,
\begin{equation}
    \Omega(\mu)= \sum_{ijkl}\sqrt{\lambda_i \lambda_j \lambda_k \lambda_l}
    \left[\int d V_A V_A \ketbra{i}{j}V_A^\dagger  \otimes V_A \ketbra{k}{l}V_A^\dagger \right]^{\otimes 2}
\end{equation}
Using Weingarten calculus \cite{CollinsSniady2006}, one obtains 
\begin{equation}
\begin{split}
  \left[\int d V_A V_A \ketbra{i}{j}V_A^\dagger  \otimes V_A \ketbra{k}{l}V_A^\dagger \right]
  &=\frac{(d\Tr\ketbra{i}{j} \Tr\ketbra{k}{l} - \Tr \ketbra{i}{j}\ketbra{k}{l} )\mathbbm 1_{AA'} - (d\Tr \ketbra{i}{j}\ketbra{k}{l} - \Tr \ketbra{i}{j} \Tr \ketbra{k}{l})S_{AA'}}{d(d^2-1)}\\
  &=\frac{(d\delta_{ij}\delta_{kl} - \delta_{il}\delta_{jk})\mathbbm 1_{AA'} + (d \delta_{il}\delta_{jk}-\delta_{ij}\delta_{kl})S_{AA'}}{d(d^2-1)}.
  \end{split}
\end{equation}
Taking the tensor product of the corresponding \(A\)- and \(B\)-averages gives
\begin{equation}
    \begin{split}
    &\frac{(d^2\delta_{ij}\delta_{kl} + \delta_{il}\delta_{jk}-2d\delta_{ij}\delta_{kl}\delta_{ik})\mathbbm 1_{AA'BB'} 
    + (d^2 \delta_{il}\delta_{jk}+\delta_{ij}\delta_{kl} - 2d\delta_{il} \delta_{jk}\delta_{ik})S_{AA'BB'}
     }{d^2(d^2-1)^2} 
     \\
     &+\frac{(d^2 \delta_{ij}\delta_{kl}\delta_{ik}+\delta_{il}\delta_{jk}\delta_{ik}-2d\delta_{ik}\delta_{kl})\mathbbm 1_{AA'}S_{BB'}+(d^2 \delta_{ij}\delta_{kl}\delta_{ik}+\delta_{il}\delta_{jk}\delta_{ik}-2d\delta_{ik}\delta_{kl})S_{AA'}\mathbbm 1_{BB'}}{d^2(d^2-1)^2}.
     \end{split}
\end{equation}
The contribution from $\mathbbm1_{AA'BB'}$ and $S_{AA'BB'}$ gives
\begin{equation}
   x= (d^2+1) \underbrace{\sum_{ij}\lambda_i \lambda_j}_{=1} - 2d \underbrace{\sum_{i}\lambda_i^2}_\mu
\end{equation}
while the contribution from $\mathbbm 1_{AA'}S_{BB'}$ and $S_{AA'}\mathbbm 1_{BB'}$ is 
\begin{equation}
   y=  (d^2+1)\underbrace{\sum_i \lambda_i ^2}_\mu
   -2d \underbrace{\sum_{ij}\lambda_i \lambda_j}_{=1}
\end{equation}
In the case of separable states we have $x=y=(d^2+1-2d)=(d-1)^2$
and we obtain $\Omega=(\mathbbm 1_{AA'} + S_{AA'})(\mathbbm1_{BB'}+S_{BB'})/(d^2(d+1)^2)$, agreeing with \cite{Zanardi2000}.
The final expression is
\begin{equation}
    \Omega(\mu)=\frac{1}{d^2(d^2-1)^2}\left[\left(d^2+1-2\mu d\right)(\mathbbm 1_{AA'BB'} + S_{AA'BB'})+ \left(\mu d^2 + \mu - 2d\right)\left(\mathbbm 1_{AA'}\otimes S_{BB'}+S_{AA'}\otimes \mathbbm 1_{BB'}\right)\right].
\end{equation}

\bibliography{ref.bib}
\end{document}